\newtheorem{theorem}{Theorem}[section]
\newtheorem{lemma}[theorem]{Lemma}
\newtheorem{proposition}[theorem]{Proposition}
\numberwithin{equation}{section}
\newcommand{\II}{{\mathbb I}}
\newcommand{\CC}{{\mathbb C}}
\newcommand{\RR}{{\mathbb R}}
\newcommand{\NN}{{\mathbb N}}
\newcommand{\TT}{{\mathbb T}}
\newcommand{\ZZ}{{\mathbb Z}}
\newcommand{\cB}{{\mathcal{B}}}
\newcommand{\cD}{{\mathcal{D}}}
\newcommand{\cF}{{\mathcal{F}}}
\newcommand{\bomega}{{\mbox{\boldmath $\omega$}}}
\newcommand{\bg}{{\mbox{\boldmath $g$}}}
\newcommand{\bp}{{\mbox{\boldmath $p$}}}
\newcommand{\sbp}{{\mbox{\footnotesize \boldmath $p$}}}
\newcommand{\bx}{{\mbox{\boldmath $x$}}}
\newcommand{\sbx}{{\mbox{\footnotesize \boldmath $x$}}}
\newcommand{\by}{{\mbox{\boldmath $y$}}}
\newcommand{\sby}{{\mbox{\footnotesize \boldmath $y$}}}
\newcommand{\bO}{{\mbox{\boldmath $O$}}}
\newcommand{\sbO}{{\mbox{\footnotesize \boldmath $O$}}}
\newcommand{\bP}{{\mbox{\boldmath $P$}}}
\newcommand{\sbP}{{\mbox{\footnotesize \boldmath $P$}}}
\newcommand{\bQ}{{\mbox{\boldmath $Q$}}}
\newcommand{\fA}{{\mathfrak A}}
\newcommand{\bfA}{{\mbox{\boldmath $\mathfrak A$}}}
\newcommand{\ofA}{\overline{\mathfrak A}}
\newcommand{\obfA}{\overline{\mbox{\boldmath $\mathfrak A$}}}
\newcommand{\fF}{{\mathfrak F}}
\newcommand{\ofF}{\overline{\mathfrak F}}
\newcommand{\fR}{{\mathfrak R}}
\newcommand{\bpartial}{{\mbox{\boldmath $\partial$}}}
\definecolor{olive}{RGB}{60,120,0}
\newcommand{\scirc}{\mbox{\footnotesize $\circ$}}
\newcommand{\ad}[1]{\mbox{Ad} \, #1}
\newcommand{\Imm}[1]{\mbox{Im} \, #1}
\newcommand{\Ree}[1]{\mbox{Re} \, #1}
\def\ie{{\it i.e.\ }}
\def\viz{{\it viz.\ }}
\def\etc{{\it etc}}
\newcommand{\be}{\begin{equation}}
\newcommand{\ee}{\end{equation}}  
\begin{document}


\title{\Large Trapped bosons, thermodynamic limit and condensation:  
a study in the \\[-1mm] framework of resolvent algebras} 



\author{Dorothea Bahns}
\email[]{dorothea.bahns@mathematik.uni-goettingen.de}
\author{Detlev Buchholz}
\email[]{detlev.buchholz@mathematik.uni-goettingen.de}
\affiliation{ Mathematisches Institut, Universit\"at G\"ottingen,
  B\"urgerstraße 40, 37073 G\"ottingen - Germany }



\begin{abstract} \noindent
The virtues of resolvent algebras, compared to other 
  approaches for the treatment of canonical quantum systems, are
  exemplified by infinite systems of non-relativistic   
  bosons. Within this framework, equilibrium states of trapped
  and untrapped bosons are defined on a fixed C*-algebra for
  all physically meaningful values of the temperature and chemical potential.
  Moreover, the algebra provides the tools for their analysis without
  having to rely on \textit{ad hoc}
  prescriptions for the test of pertinent features, such as the
  appearance of Bose-Einstein condensates. The method is illustrated in case of 
  non-interacting systems in any number of spatial dimensions
  and sheds new light on the appearance of condensates.
  Yet the framework also covers interactions and thus  
  provides a universal basis for the analysis of bosonic systems. 

  \medskip \noindent 
  Dedicated to Jakob Yngvason on the occasion of his 75th birthday
\end{abstract}


\maketitle


%
\vspace*{-4mm}
\section{Introduction}
\setcounter{equation}{0}

\noindent The algebraic approach to the discussion of infinite   
bosonic systems appears to be hampered by the fact that bosons
can accumulate unlimitedly in finite regions. It might therefore seem 
hopeless to find observables which still give meaningful
results in such singular states. Indeed, algebras of polynomials of the
underlying Bose fields do not work. Proceeding to the common Weyl 
algebra generated by their exponentials, the unitary Weyl operators, 
does not solve this problem either. These operators
can often not be interpreted as meaningful observables in such
singular states in view of their unrestrained fluctuations
and discontinuous
behavior under symmetry transformations. In applications one
encounters this problem by observing, for example, that the 
limits of certain sequences of equilibrium states on the Weyl algebra 
no longer satisfy the distinctive KMS condition.

\medskip
A surprisingly simple solution of these well known problems was    
presented in \cite{BuGr2}. Instead of dealing with
exponentials of the Bose fields, one considers their resolvents.
The resulting C*-algebra is, such as the Weyl algebra, defined
by a few relations, encoding the algebraic properties of the field. 
But in contrast to the Weyl algebra, this resolvent algebra
is not simple, \ie it has a non-trivial ideal structure \cite{Bu1}.
As a matter of fact, these ideals are a necessary ingredient for the
solution of the preceding problems, since they comprise those 
observables which become singular, hence physically meaningless in
singular states. These observables are
annihilated in the corresponding representations, \ie they are
members of some ideal of the resolvent
algebra and do not cause any problems in these states. 
On the other hand, the resolvent algebra has faithful irreducible
representations induced by states of physical interest \cite{BuGr2}.
It therefore covers in a meaningful manner all possible
states of bosonic systems. 

\medskip
It ought to be mentioned that resolvent algebras had already
appeared in disguise much earlier in an investigation by Kastler of
potentially interesting algebras for the description of canonical
quantum systems \cite{Ka}. This relationship was recently uncovered  
by Georgescu and Iftimovici \cite{GeIf}, who also added further
structural results to this framework. In spite of the 
work of Kastler, the utility of the resolvent algebra
for the discussion of infinite bosonic systems remained
unnoticed for decades, however. Only more recently
it has found applications
to problems of physical interest, cf.\ for example
\cite{Bu2,BuGr1,BuGr3,Co,KaMa}. 

\medskip
Within this framework, we complement here these results by an    
analysis of trapped and untrapped equilibrium states of non-interacting bosons
for given temperature and chemical potential.
Trapped states are 
described by Gibbs-von Neumann density operators on Fock space
and are called Gibbs-von Neumann states, for short.
We study in detail their thermodynamic and infinite particle
number limits, the latter being obtained by
proceeding to the maximal admissible value of the chemical potential,
and the appearance of Bose-Einstein condensates. These topics have
been widely discussed in the literature from various points of view,
cf.\ \cite{BrRo, LiSeSoYn, Ve} and references quoted there. 
It is the primary purpose of our article to illustrate the  
virtues of the resolvent algebras, compared to these other
approaches, and to shed new light on some well 
known results. For the convenience of the reader, we begin
by briefly recalling from \cite{BuGr2} some relevant definitions
and pertinent facts. 

\medskip
As already mentioned, the resolvent algebra can abstractly be defined
by relations. Since it is faithfully represented on Fock space
\cite{BuGr2}, we can deal here with this concrete representation
and proceed from it later to disjoint representations of physical
interest. The bosonic Fock space is denoted by $\cF$;
it is the infinite direct sum $\bigoplus_{n = 0}^\infty \, \cF_n$ of 
$n$-particle spaces which are generated by $n$-fold 
symmetric tensor products of single particle states in 
$\cF_1 \simeq L^2(\RR^s)$ in $s$ spatial dimensions. 
The zero particle state in $\cF$ (vacuum) is denoted by $\Omega$. 

\medskip
Let $\cD(\RR^s) \subset L^2(\RR^s)$
be some space of complex-valued test functions
on configuration space. As a matter of fact,
one may proceed from any dense subspace of $L^2(\RR^s)$,
our results do not depend on its particular choice.
We will comment on this point further below.
The Bose field underlying Fock space is denoted
by $\phi$. It is defined as a real linear map from $\cD(\RR^s)$ to selfadjoint
operators, acting on a common core in $\cF$, \viz the domain
of the particle number operator $N$. The corresponding
annihilation and creation operators are given by 
$a(f) = 2^{-1/2} \big(\phi(f) + i \phi(i f) \big)$
and
$a(f)^* =  2^{-1/2} \big( \phi(f) - i \phi(i f) \big)$,
$f \in \cD(\RR^s)$.
In order to avoid the subtleties associated with unbounded operators,
it is appropriate to proceed to bounded functions of the field.
Continuous functions of the field, vanishing at infinity, are
most convenient and prominent examples are the  resolvents 
\be
R(\lambda, f) \doteq (i \lambda 1 - \phi(f))^{-1} \, ,
\quad f \in \cD(\RR^s) \, , \ \Ree{\lambda} \neq 0  \, .
\ee
Moreover, these
resolvents comprise all algebraic properties of the field.
For example, the canonical commutation relations are encoded in
the relation 
\be
[R(\lambda, f), R(\mu,g)] = i \, \Imm{\langle f, g \rangle} \
R(\lambda, f)  R(\mu,g)^2 \, R(\lambda, f) \, ,
\ee
where $\langle f, g \rangle$ denotes the (sesquilinear) 
scalar product of $f,g$ in  $L^2(\RR^s)$, fixing the
symplectic form $\Imm{\langle f, g \rangle} $. 
The resolvent algebra $\fR$ is defined as the 
norm-closed subalgebra
of the algebra $\cB(\cF)$ of bounded operators on $\cF$, which is
generated by the resolvents; it is thus a unital
C*-algebra  \cite{BuGr2}. 

\medskip
In the discussion of infinite bosonic systems, occupation numbers
of particle states
play a prominent role. The corresponding observables can be identified
within $\fR$ with the help of
the gauge group generated by the particle number operator $N$.
It operates by automorphisms $\gamma$ on $\fR$ and acts on 
the basic resolvents according to 
\be
\gamma(u)\big(R(\lambda,f)\big) \doteq e^{iuN} R(\lambda, f) \, e^{-iuN}
= R(\lambda, e^{iu} f) \, , \quad 0 \leq u \leq 2 \pi \, .
\ee
Even though this action is discontinuous in $u$ with regard 
to the norm topology, harmonic analysis of 
the gauge group is possible in
the C*-algebra $\fR$ in the following sense. One finds \cite{Bu4}
that for any $R \in \fR$ the integrals
\be  \label{e.1.4}
R_m \doteq (1/2 \pi) \int_0^{2 \pi} \! du \, e^{imu} \gamma(u)(R) \, , \quad
m \in \ZZ \, ,
\ee
being defined on $\cF$ in the strong operator topology, are  
elements of the resolvent algebra $\fR$. In particular, the
operators corresponding to
$m = 0$ constitute a norm
closed subalgebra $\fA \subset \fR$,
which contains the gauge invariant observables.
The operators corresponding to arbitrary $m \in \ZZ$ transform
as tensors under the action of the gauge group and generate
a norm closed subalgebra $\fF \subset \fR$ on which this group acts
pointwise norm continuously. Thus, harmonic analysis and
synthesis of the gauge group is possible on $\fF$. It is
therefore convenient to focus on these subalgebras in applications. 

\medskip
Let us mention as an aside that the Weyl algebra, although being     
stable under the action of the gauge group, does not contain
a single gauge invariant observable, apart from multiples of the
identity. The integrals of Weyl operators,
analogous to relation \eqref{e.1.4}, 
are not contained in the Weyl algebra and therefore cannot be used in 
the analysis of non-Fock states on it. 

\medskip
The primary motivation for the introduction of resolvent algebras    
in \cite{BuGr1} was the observation that
they admit the automorphic action of a multitude of
dynamics of physical interest, in contrast to the
Weyl algebra. For bosonic systems with a finite number of
degrees of freedom, described by finite dimensional
test function spaces, the preceding framework is fully satisfactory.
Yet for infinite systems it is
in general too restrictive. This can
already be seen in case of non-interacting systems: let $H$ be
the second quantization of any given selfadjoint operator $h$ on the
single particle space $\cF_1$. Thus $H$ commutes with the particle
number operator $N$ and one may therefore ask whether the
algebra of observables $\fA$ is stable under the adjoint action
of $e^{itH}$, $t \in \RR$. It turns out that for any given $n \in \NN$
one has 
\be \label{e.1.5}
\ad{e^{itH} (A)}  \upharpoonright
   {\textstyle \bigoplus}_{m = 0}^n \, \cF_m \, \in \,
   \fA \upharpoonright
   {\textstyle \bigoplus}_{m = 0}^n \, \cF_m \, , \qquad
     t \in \RR \, , \ A \in \fA \, , 
\ee
\ie the restrictions of the observable algebra to states
with limited particle numbers are stable under
all of these dynamics.
Yet the hoped for inclusion $\ad{e^{itH}} (\fA) \subset \fA$ 
holds only if the chosen test function space is stable
under the action of the single particle dynamics,
\ie $e^{ith} \, \cD(\RR^s) \subset \cD(\RR^s)$, $t \in \RR$. 
As an aside, relation \eqref{e.1.5} still holds
in case of dynamics involving sufficiently regular pair interactions,
but the desired inclusion never applies. 

\medskip 
These problems can be solved by slightly extending the
observable algebra. Since this algebra works well for finite
systems, it seems natural 
to try to find an extension without
altering the observables on the subspaces
of Fock space with limited particle number. Adopting this 
point of view, the maximal admissible extension is the C*-algebra
$\ofA \supset \fA$ which is generated by all bounded, gauge invariant
operators on $\cF$ whose action on any given subspace 
${\textstyle \bigoplus}_{m = 0}^n \, \cF_m$ coincides with
the action of some operator in $\fA$, possibly depending
on $n \in \NN$. Alternatively, $\ofA$ can be presented as 
projective limit of the observable algebras on the
subspaces of $\cF$ with limited particle number.
We will refer to $\ofA$ as \textit{canonical extension} of $\fA$.
It is only a small subalgebra of
the algebra of all bounded, gauge invariant operators on $\cF$;
for example, it does not contain any operator with continuous
spectrum. Nevertheless, this algebra is stable under the action of
non-interacting dynamics as well as interacting dynamics
with continuous two-body potentials 
vanishing at infinity. In a similar manner, the
field algebra $\fF$, generated by all tensors under the action
of the gauge group, can be extended to a C*-algebra
$\ofF$ which is stable under these dynamics. As a matter of 
fact, this extension is obtained by adding to $\ofA$
a single isometry. For a thorough discussion of these
facts, cf.\ \cite{Bu3,Bu4}.

\medskip
The algebras $\ofA$, respectively $\ofF$, provide in a sense maximal 
arenas for the discussion of the dynamics of infinite bosonic
systems. Yet in applications it is frequently more convenient
to proceed to suitable subalgebras. For given dynamics, a minimal
choice for the
observables would be the C*-algebra, which is generated on $\cF$ by the 
operators $\{ \ad{e^{itH}} (A) : A \in \fA \, , \ t \in \RR \}$.
If one needs to have control on the continuity properties
of the dynamics, pointwise in the norm topology (characterizing
C*-dynamical systems), one is led to
consider the C*-algebra $\ofA_c$, generated by
\be \label{e.1.6} 
\Big\{ \int \! dt \, f(t) \, \ad{e^{itH}} (A) : A \in \fA \, ,
\ f \in L^1(\RR) \Big\} \, ,
\ee
where the integrals are defined in the strong operator topology.
It was shown in \cite{Bu3} that the algebra $\obfA_c$
is also contained in $\ofA$. 
In a similar manner one can proceed for given dynamics to convenient 
subalgebras of $\ofF$, cf.\ \cite{Bu4}.

\medskip
In the present article we restrict our attention to    
systems of non-interacting bosons. Since we are interested in trapped
systems and their thermodynamic limit, we need to consider different
dynamics. An algebra which covers them all is obtained by
choosing as test function space underlying the
resolvent algebra $\fR$ the space of all square integrable
functions $L^2(\RR^s)$. We will deal in the following with
this algebra and the corresponding observable and field
subalgebras~$\fA$, respectively $\fF$. It is note-worthy that
the latter two algebras are contained in the
canonically extended  algebras obtained 
for any initial choice of a test function space $\cD(\RR^s)$.
We also note that in the present case of non-interacting systems
most computations can conveniently be performed within
the resolvent algebra $\fR$,
because it is stable under the dynamics. But in case of
interaction it is advantageous to proceed from the outset
to its observable, respectively field subalgebras, where one has
better control of the action of the dynamics.

\medskip
Within this framework we consider Gibbs-von Neumann states,    
\ie trapped thermal equilibrium states of bosons for given temperature
and chemical potential, which are
faithful and normal with regard to the Fock representation.
By proceeding to the limits 
of zero trapping potential (thermodynamic limit), respectively 
infinite particle number (maximal chemical potential), 
we obtain states which still satisfy the
KMS-condition for the limit dynamics.
The resulting representations are not normal with
regard to the Fock representation, however. Moreover,
in the infinite particle number limit they 
are also not faithful, 
since observables, which are sensitive to
particles in the respective ground states become
singular, \ie the corresponding operators in
the resolvent algebra disappear in the limit. 
This is an indication for the appearance of
condensates. 

\medskip
The study of condensates requires a more detailed   
analysis, however. To this end we adopt a
local point of view: let $\bO \subset \RR^s$ be any
bounded region and let $N(\bO)$ be the corresponding
number operator on Fock space $\cF$, counting the
number of particles in $\bO$. Thus for any given \textit{normal} state
$\omega$, described by a density matrix on Fock
space, the quantity $\omega(N(\bO))$ is the
expected number of particles in $\bO$. 
For normal thermal states, being of interest here, this quantity is
finite. Now given any sequence $\omega_\nu$ of normal states,
$\nu \in \NN$, these expectation values may (a) either stay finite or
(b) approach infinity in the limit. In case~(a), the
limit states have a finite particle density in $\bO$; 
in order to increase it one needs to add to them
further particles. 
In case (b), the appearance of an infinity of particles
in~$\bO$ can have different reasons.
We attribute it here to the formation of
a Bose-Einstein condensate whenever there is some wave
function $g$ such that the number of particles in $\bO$
having wave functions in the orthogonal complement of~$g$
stays finite in the limit. So these particles 
have a critical density.
The divergence of $N(\bO)$
can then be attributed entirely to particles with
wave function $g$, creating condensates in the
approximating states which grow unboundedly. 

\medskip
The computations needed in the analysis of     
equilibrium states in order 
to  decide which of these cases is at
hand can conveniently be performed with aid of the 
resolvent algebra. It turns out that for trapped thermal systems, 
which are confined by some regular potential, case (b)
always occurs in the limit of infinite particle numbers.
Wheras the density of particles in excited states,
forming a \textit{thermal cloud}, remains bounded
in the approximating states, these states exhibit 
growing densities of Bose-Einstein condensates of particles
with wave function $g$ carrying minimal energy. This feature
occurs in any number of dimensions and does not imply the 
spontaneous breakdown of gauge invariance.

\medskip
In case of the thermodynamic  
limit states, containing an infinity of particles from the
outset, the local properties of the states depend on the number of
dimensions: in \mbox{$s = 1,2$} dimensions, the states obtained for
maximal chemical potential also belong to case~(b) without
including proper condensates. Yet, as we shall see, 
they have properties which resemble quasi-condensates.  
If $s > 2$ one arrives in the limit at case~(a), \ie the particle   
density in bounded regions stays finite. In order to increase this
density, one can modify the limit states without destroying
their equilibrium property. This is accomplished by shifting
the quantum field by a classical field which is 
created by the collective effects of particles with
a suitable (improper) 
wave function $g$. It can be described by the adjoint action of
Weyl operators, depending on~$g$. 
In this manner, the local density of particles with this
wave function can be arbitrarily increased without affecting the
density of the thermal cloud, escorting them.
Hence the resulting states describe condensates as well. 
These states are no longer gauge invariant, however,
due to the action of the
Weyl operators. So, in summary, the resolvent algebra provides
an adequate framework for the study of trapped and untrapped
thermal states and of the differing manifestations 
of condensation. 

\medskip
Our article is organized as follows. In the subsequent section
we consider quasifree states on the resolvent algebra which
are fixed by specifying their one and two-point functions.
For appropriate sequences of such states, we will study 
their convergence to limit states which are disjoint from the
Fock representation. The insights gained in this section
will be used in Sect.\ 3 in the analysis of trapped equilibrium
states and of their thermodynamic and infinite particle
number limits. The particle
density is recovered from the observable algebra in Sect.\ 4 and 
the formation and analysis of condensates 
is discussed in Sec.\ 5. The article
concludes with a summary and an outlook on interacting systems.

\section{Quasifree states}
\setcounter{equation}{0}

In this section we consider certain specific quasifree states on
the resolvent algebra which are normal with regard to the Fock
representation. We will proceed
from them to limit states, inducing inequivalent representations
by the GNS construction~\cite{Ha}. In this analysis
we make use of the fact that the Fock representation of the
resolvent algebra extends to a regular representation of the
Weyl algebra and \textit{vice versa}. Regular representations
of resolvent algebras are defined by the property that all
resolvents have trivial kernels. They are in one-to-one
correspondence to regular representations of the Weyl
algebra, in contrast to the singular
representations \cite[Cor.\ 4.4]{BuGr2}.

\medskip
The unitary Weyl operators in the Fock representation are denoted 
by $W(f)$, $f \in L^2(\RR^s)$. They satisfy the common Weyl relations 
\be 
W(f) W(g) = e^{- (i/2) \, \text{Im} \langle f, g \rangle} \, W(f + g) \, ,
\quad f,g \in L^2(\RR^s) \, .
\ee
The quasifree states $\omega$ of interest
here are conveniently defined on the Weyl operators
by the equality
\be \label{e.2.2}
\omega(W(f)) = e^{i \, l_\omega(f)} e^{- (1/2) \langle f, f \rangle_\omega } \, ,
\quad f \in L^2(\RR^s) \, .
\ee
Here $\langle f , g \rangle_\omega = \overline{\langle g , f \rangle}{}_\omega$
is a bilinear form on $L^2(\RR^s)$, regarded as a real vector space.
Its real part is a scalar product, its imaginary part is given by 
$(1/2) \, \Imm{\langle f, g \rangle}$, where   
$(1/4) \, |\Imm{\langle f, g \rangle}|^2 \leq \langle f, f \rangle_\omega
\,  \langle g, g \rangle_\omega$ for    
$f,g \in L^2(\RR^s)$.
To simplify terminology, we refer to the forms
$\langle  \, \cdot \,  , \, \cdot \, \rangle_\omega$ as 
scalar products. We assume in the following that they are
gauge invariant (stable under multiplication of its entries
with equal phase factors). The symbol~$\, l_\omega(\, \cdot \, )$
denotes a real linear functional on $L^2(\RR^s)$, which is
fixed by the one-point function of the field. The two-point
function of the field is related to these entities by the formula
\be \label{e.2.3}
\omega(\phi(f) \phi(g)) = \langle f, g \rangle_\omega +
l_\omega(f) l_\omega(g) \, , \quad f,g \in L^2(\RR^s) \, .
\ee
With the preceding conventions, a  quasifree state is gauge invariant if
and only if the linear functional vanishes.

\medskip
We are interested here in certain specific
sequences of such quasifree states and their
limits. As is well known, any sequence of states 
has limit points. But the resulting expectation values of
the Weyl operators have in general the form \eqref{e.2.2} only on
subspaces $\cD \subset L^2(\RR^s)$ and vanish on their complements. 
Moreover, the underlying linear functionals may attain non-linear
limits. Since the Weyl algebra is simple, such  states  
lead to singular GNS representations, often defying a 
meaningful physical interpretation. In particular,
it may happen that generators of the spacetime symmetries
are not defined. 

\medskip
As we shall see, the situation is better for
quasifree states on the resolvent algebra. 
The expectation values of the resolvents in 
these state are obtained from equation \eqref{e.2.2} by a Laplace 
transformation. For $\Ree{\lambda_1} > 0, \dots , \Ree{\lambda_n} > 0$
one has, cf.\ \cite[Eq.\ 17]{BuGr2},   
\begin{align}  \label{e.2.4} 
  & i^n  \omega\big(R(\lambda_1,f_1) \cdots R(\lambda_n, f_n)\big)  \\
  & = \int_{\RR_+^n} \! \! \! du_1 \cdots du_n \, 
  e^{- \sum_j u_j \lambda_j}
  \, \omega \big(W(- u_1 f_1) \cdots W(- u_n f_n)\big) \nonumber \\
  & = \int_{\RR_+^n} \! \! \! du_1 \cdots du_n \, 
  e^{- \sum_j u_j( \lambda_j + i l_\omega(f_j))} \, e^{ - (i/2)  
    \sum_{k < l} {\mathrm{Im}}
    \langle u_k f_k, u_l f_l \rangle} \, 
    e^{- (1/2)  \sum_{k,l} \langle u_k f_k , u_l f_l \rangle_\omega } \nonumber \, .
\end{align}  
Making use of the equality $R(-\lambda,f) = -R(\lambda,-f)$
one obtains analogous relations for any choice of the
signs of $\Ree{\lambda_1}, \dots , \Ree{\lambda_n}$. 

\medskip
Let us turn now to the analysis of sequences of quasifree states $\omega_\nu$ 
on the resolvent algebra,
which are fixed  by specific sequences of scalar products
and linear functionals. Having the applications in mind, we assume 
that the underlying sequences of
scalar products $\langle f , f \rangle_\nu$, $f \in L^2(\RR^s)$,
have (possibly infinite) limits as $\nu \in \NN$ tends to infinity.
With this input the following proposition obtains.

\begin{proposition} \label{p.2.1}
  Let $\omega_\nu$, $\nu \in \NN$, be a sequence of quasifree states on the
  resolvent algebra $\fR$ with properties specified above.
    \begin{itemize}
    \item[(i)] If the states are gauge invariant, the sequence    
      converges, pointwise on $\fR$, to some state $\omega_\infty$. 
      There is some complex subspace $\cD \subset L^2(\RR^s)$
      with scalar product
      $\langle \, \cdot \, , \, \cdot \, \rangle_\infty$, 
      such that for $f_1, \dots , f_n \in \cD$  and
      $\Ree{\lambda_1} > 0, \dots , \Ree{\lambda_n} > 0$,
      $n \in \NN$, 
      \begin{align}
        &  i^n \omega_\infty\big(R(\lambda_1, f_1) \cdots R(\lambda_n,f_n) 
        \big) \\
       & \! \! = \! \! \int_{\RR_+^n} \! \! \! \! du_1 \!  \cdots   du_n \, 
              e^{- \sum_j u_j \lambda_j} \, e^{ - (i/2)  
       \sum_{k < l} {\mathrm{Im}}
       \langle u_k f_k, u_l f_l \rangle} \, 
       e^{- (1/2)  \sum_{k,l} \langle u_k f_k , u_l f_l \rangle_\infty } \nonumber \, .
      \end{align}
      If 
      $\, \{f_1, \dots f_n\} \bigcap \big( L^2(\RR^s) \backslash \cD \big) 
      \neq \emptyset \,$ one has
      \be
      \omega_\infty\big(R(\lambda_1, f_1) \cdots R(\lambda_n,f_n)\big) = 0
      \, .
      \ee
    \item[(ii)] If the states are not gauge invariant, there exists    
      a state $\omega_\infty$ on $\fR$, which is a weak-* limit point
      of the sequence. It has the following properties:
      as in (i), there is some
      complex subspace $\cD \subset L^2(\RR^s)$ with a scalar product
      $\langle \, \cdot \, , \, \cdot \, \rangle_\infty$ and 
      a real subspace $\cD_\infty \subset \cD$ with a real
      linear functional $l_\infty$ such that for 
      $f_1, \dots , f_n \in \cD_\infty$  and
      $\Ree{\lambda_1} > 0, \dots , \Ree{\lambda_n} > 0$,
      $n \in \NN$, 
      \begin{align} \label{e.2.7}
      &  i^n \omega_\infty\big(R(\lambda_1, f_1) \cdots R(\lambda_n,f_n)\big) \\
      & \! \! = \! \! \int_{\RR_+^n} \! \! \! \! du_1 \!  \cdots   du_n \, 
        e^{- \sum_j u_j( \lambda_j + i l_\infty(f_j))} 
        e^{- (i/2) \sum_{k < l} {\mathrm Im}
    \langle u_k f_k, u_l f_l \rangle} 
    e^{- (1/2)  \sum_{k,l} \langle u_k f_k , u_l f_l \rangle_\infty} \nonumber .
      \end{align}
     If 
      $\, \{f_1, \dots f_n\} \bigcap \big( L^2(\RR^s) \backslash \cD_\infty \big) 
      \neq \emptyset \,$ one has
      \be
      \omega_\infty\big(R(\lambda_1, f_1) \cdots R(\lambda_n,f_n)\big) = 0
      \, .
      \ee
    \item[(iii)] Given any finite dimensional complex
      subspace $K \subset L^2(\RR^s)$,
      let \mbox{$\fR(K) \subset \fR$} be the subalgebra which
      is generated by resolvents $R(\lambda,f)$ with
      $f \in K$, $\Ree{\lambda} \neq 0$. There exists a subset
      $\II \subset \NN$ such that the sequence
      $\omega_\iota \upharpoonright
      \fR(K)$, $\iota \in \II$, converges pointwise 
      to $\omega_\infty$.
     \end{itemize}  
\end{proposition}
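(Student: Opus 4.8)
The common engine for all three parts is the Laplace representation \eqref{e.2.4}, which presents each resolvent monomial $\omega_\nu\big(R(\lambda_1,f_1)\cdots R(\lambda_n,f_n)\big)$ as an integral over $\RR_+^n$ whose integrand factors into the oscillatory term $e^{-\sum_j u_j(\lambda_j+il_\nu(f_j))}$, the symplectic phase $e^{-(i/2)\sum_{k<l}\Imm{\langle u_kf_k,u_lf_l\rangle}}$ (which is independent of $\nu$, the imaginary part of the form being the fixed symplectic form), and the Gaussian $e^{-(1/2)\langle\sum_k u_kf_k,\sum_l u_lf_l\rangle_\nu}$. The single fact used over and over is that the modulus of this integrand is dominated by $e^{-\sum_j u_j\Ree{\lambda_j}}$ uniformly in $\nu$, because $\langle\,\cdot\,,\,\cdot\,\rangle_\nu$ is positive semidefinite while the other two factors have modulus one. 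This dominating function is integrable on $\RR_+^n$ since all $\Ree{\lambda_j}>0$, and it is what licenses every passage to the limit below by dominated convergence; the sign cases for $\Ree{\lambda_j}$ reduce to this one via $R(-\lambda,f)=-R(\lambda,-f)$.

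For (i), I would set $\cD\doteq\{f\in L^2(\RR^s):\lim_\nu\langle f,f\rangle_\nu<\infty\}$. Since the real part of each form is a scalar product, Cauchy--Schwarz shows $\cD$ is closed under addition, and gauge invariance of the forms shows it is stable under multiplication by phases, so $\cD$ is a complex subspace; polarization then produces a limiting scalar product $\langle\,\cdot\,,\,\cdot\,\rangle_\infty$ on it. When all $f_j\in\cD$ the integrand converges pointwise in $u$ to the asserted limit, and domination gives the stated formula. When some $f_j\notin\cD$, the point is that $u\mapsto\sum_k u_kf_k$ leaves $\cD$ for almost every $u$: the exceptional set is the kernel of a nonzero linear map $\RR^n\to L^2(\RR^s)/\cD$, hence Lebesgue-null, so the Gaussian tends to $0$ a.e.\ and domination forces the integral to $0$. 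As the limit of every monomial is thereby pinned down with no ambiguity, and such monomials span a norm-dense subalgebra of $\fR$ while $\|\omega_\nu\|=1$, a three-epsilon argument upgrades this to pointwise convergence of the whole sequence on $\fR$, the limit being a state.

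Part (ii) follows the same template but must absorb the functionals $l_\nu$, which need not converge. I would pass to a single subnet along which both $\omega_\nu$ converges weak-$*$ to a state $\omega_\infty$ (weak-$*$ compactness of the state space) and $l_\nu(f)$ converges in $[-\infty,+\infty]$ for every $f$ (Tychonoff compactness of $\prod_{f}[-\infty,+\infty]$). The subspace $\cD$ and the form $\langle\,\cdot\,,\,\cdot\,\rangle_\infty$ are defined as in (i); setting $\cD_\infty\doteq\{f\in\cD:\lim l_\nu(f)\in\RR\}$ and $l_\infty\doteq\lim l_\nu$ there, linearity of each $l_\nu$ makes $\cD_\infty$ a real subspace and $l_\infty$ a real-linear functional. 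For $f_1,\dots,f_n\in\cD_\infty$ all exponents converge to finite limits and domination yields \eqref{e.2.7}. The vanishing statement now splits: if some $f_j\notin\cD$ the Gaussian argument of (i) applies verbatim, whereas if $f_j\in\cD\setminus\cD_\infty$ then $l_\nu(f_j)\to\pm\infty$ and the integral must be killed by rapid oscillation, by integrating $u_j$ first and invoking the Riemann--Lebesgue lemma on the remaining $L^1(\RR_+,du_j)$ factor before integrating out the surviving variables. I expect this oscillatory vanishing to be the main obstacle, since the inner $u_j$-integral must be shown to decay uniformly enough in the other variables to survive the outer integration, and the two vanishing mechanisms together with the convergence formula must all be made to refer to the same final subnet; this coupling is precisely why (ii) claims only a limit point rather than convergence of the full sequence.

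Finally, (iii) is comparatively soft and exploits finite dimensionality of $K$. The algebra $\fR(K)$ is separable, so the weak-$*$ topology on the unit ball of its dual is metrizable; the restriction map is weak-$*$ continuous, whence $\omega_\infty\!\upharpoonright\!\fR(K)$ is still a limit point of the sequence $\omega_\nu\!\upharpoonright\!\fR(K)$. In a metric space a limit point of a sequence is attained along a genuine subsequence, which furnishes the index set $\II\subset\NN$ with $\omega_\iota\!\upharpoonright\!\fR(K)\to\omega_\infty\!\upharpoonright\!\fR(K)$. This step is where the subnet of (ii) is downgraded to a subsequence, and it carries no difficulty beyond separability of $\fR(K)$.
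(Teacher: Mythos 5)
Your proofs of (i) and (ii) follow essentially the paper's own route: the same Laplace representation \eqref{e.2.4}, the same domination of the integrand by $e^{-\sum_j u_j \Ree{\lambda_j}}$ (valid since the quadratic form in the exponent is nonnegative), the same definition of $\cD$ via finiteness of $\lim_\nu \langle f, f \rangle_\nu$, and the same compactification of the functional values -- the paper uses the two-point compactification $c(x) = x/\sqrt{x^2+1}$ together with Tychonoff and then extracts per-family subsequences, whereas your single subnet carrying simultaneously the states and all the $l_\nu(f)$ is an equivalent, slightly cleaner packaging that makes positivity of $\omega_\infty$ automatic. Your quotient-map argument that $u \mapsto \sum_k u_k f_k$ leaves $\cD$ for Lebesgue-a.e.\ $u$ makes explicit what the paper merely asserts, and it is valid precisely because of the standing hypothesis that $\langle f, f\rangle_\nu$ has a (possibly infinite) limit for every $f$, so ``not in $\cD$'' means divergence to $+\infty$. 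The step you flag as the main obstacle -- uniform decay of the inner $u_j$-integral -- is not in fact a difficulty, and the paper's proof shows why: once the scalar products are replaced by their limits, the entire $n$-fold integral is the Fourier transform of a fixed element of $L^1(\RR_+^n)$ evaluated at the point $(l_\nu(f_1), \dots , l_\nu(f_n)) \in \RR^n$; continuity of this transform yields \eqref{e.2.7} when all coordinates converge, and its vanishing at infinity (Riemann--Lebesgue in $\RR^n$) yields zero as soon as a single coordinate diverges, with no iterated integration at all. Your iterated version also closes without extra work: the inner integral is bounded by $(\Ree{\lambda_j})^{-1}$ uniformly in the remaining variables and in $\nu$, and tends to zero pointwise by the one-dimensional lemma, so dominated convergence on the outer variables suffices -- no uniform decay is needed.

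Part (iii) is where you genuinely depart from the paper. The paper exploits finite dimensionality directly: it chooses one subsequence $\II$ along which the functionals $l_\iota$ converge on (a basis of, hence all of) $K \cap \cD_\infty$ and diverge on $K \setminus (K \cap \cD_\infty)$, and then reruns the monomial computation along that subsequence. You instead invoke separability of $\fR(K)$ to metrize the weak-$*$ topology on bounded sets of its dual and upgrade the cluster point to a subsequential limit. This works, but separability of $\fR(K)$ is a nontrivial input special to resolvent algebras over finite-dimensional spaces -- the analogous claim fails badly for the Weyl algebra, where distinct Weyl unitaries sit at mutual distance $2$ -- so it should be quoted from \cite{BuGr2} rather than asserted in passing. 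Granted that citation, your softer metrizability argument is a legitimate alternative to the paper's basis-adapted subsequence, at the price of importing a structural fact about $\fR(K)$ that the paper's argument does not need.
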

\noindent \textbf{Remark:} Resolvents assigned in (i) to test functions in 
$L^2(\RR^s) \backslash \cD$, respectively in (ii) to test functions in 
$L^2(\RR^s) \backslash \cD_\infty$, generate 
ideals in $\fR$. Note that these sets are not fixed
from the outset and depend on the limit state.

\begin{proof}
  Let $\cD \subset L^2(\RR^s)$ be the subset of all functions for which
  the sequences $\langle f, f \rangle_\nu$, $f \in \cD$, have a finite limit
  for $\nu$ tending to infinity. It follows from the triangle inequality
  and the polarization identity that $\cD$ is a complex linear
  subspace of $L^2(\RR^s)$ and that there is some scalar product
  $\langle \, \cdot \, , \, \cdot \, \rangle_\infty$
  on $\cD$ such that
  \be
      \lim_{l \rightarrow \infty} \langle f , g  \rangle_\nu =
      \langle f , g \rangle_\infty \, , \quad f,g \in \cD \, .
  \ee
  Now the function appearing in the
  exponents of the last line of equation \eqref{e.2.4},
  \begin{align} \label{e.kms}
    u_1, \dots, u_n & \mapsto
    (i/2) \sum_{k < l} \, \Imm{\langle u_k f_k, u_l f_l \rangle} +
    (1/2) \, \sum_{k,l} \, \langle u_k f_k, u_l f_l \rangle_\nu  \nonumber \\
    & =    \sum_{k < l} \, u_k u_l \, \langle f_k, f_l \rangle_\nu
    + (1/2)  \sum_k \, u_k^2 \, \langle f_k, f_k \rangle_\nu \, ,
  \end{align}
  has a non-negative real part. If $f_1, \dots , f_n \in \cD$, it 
  converges in the limit of large $\nu$ uniformly on compact subsets of
  $\RR^n$ to a similar expression, where the sequence of
  scalar products is replaced by their limits. On the other hand,
  if $f_k \in L^2(\RR^s) \backslash \cD$ for some $k$, the real part
  of this function tends to $+ \infty$ for almost all $u_1, \dots , u_n$.
  Now the modulus of the integrand in equation \eqref{e.2.4}
  is bounded by the integrable function
  $u_1, \dots , u_n \mapsto e^{- \sum_j u_j {\mathrm Re} \lambda_j}$.
  It therefore follows from the dominated convergence theorem that the
  expectation values $\omega_\nu(R(\lambda_1,f_1) \cdots R(\lambda_n,f_n))$
  in gauge invariant states converge in the limit of large $\nu$ 
  to the expressions given in part~(i) of the statement. Since
  the finite sums of products of resolvents are norm dense in the 
  resolvent algebra, it is then also clear that the sequence of
  gauge invariant states
  converges pointwise on $\fR$, completing the proof of part (i)
  of the statement.

  \medskip
  Turning to part (ii), the complex subspace
  $\cD \subset L^2(\RR^s)$ and the limit scalar
  product $\langle \, \cdot \, , \, \cdot \, \rangle_\infty$ are
  defined as in the preceding step. It also follows from that
  argument that the expectation values
  $\omega_\nu(R(\lambda_1,f_1) \cdots R(\lambda_n,f_n))$
  converge to $0$ if $f_j \in L^2(\RR^s) \backslash \cD$ for some
  $j$. Moreover, if $f_1, \dots , f_n \in \cD$, one can replace
  in equation \eqref{e.2.4} the 
  scalar products $\langle \, \cdot \, , \, \cdot \, \rangle_\nu$ 
  by their limit values
  $\langle \, \cdot \, , \, \cdot \, \rangle_\infty$,
  making use of the dominated convergence theorem.
  The resulting error vanishes in the limit of large~$\nu$. So we can 
  focus on the sequence of real linear functionals
  $l_\nu$, $\nu \in \NN$, restricted to~$\cD$.

\medskip
In order to exhibit the properties of this sequence, we proceed 
to the two-point compactification of $\RR$, given by the
function $x \mapsto c(x) \doteq x/\sqrt{x^2 + 1}$, where  
$\pm \infty$ are
mapped to $\pm 1$. We then consider the sequence of
functionals on $\cD$ given by $f \mapsto c(l_\nu(f)) \in [-1,1]$,
$\nu \in \NN$. According to Tychonoff's theorem the
cartesian product $\Pi_{f \in \cD} \, [-1,1]^f$, equipped with
the product topology, is compact. Hence the sequences
$\nu \mapsto c(l_\nu(f))$ have some limit (accumulation) point
in this product, which is 
denoted by $c(l_\infty(f))$, $f \in \cD$. Thus,
given any finite family of functions $f_j \in \cD$, 
there exists some subset $\II \subset \NN$ such that
$\lim_{\, \iota \in \II} c(l_\iota(f_j)) = c(l_\infty(f_j))$,
$j = 1, \dots , n$.

\medskip 
Now let $\cD_\infty \subset \cD$
be the set of all $f \in \cD$ for which 
$|c(l_\infty(f))| < 1$. Applying to
the corresponding approximating sequences the inverse function
$c^{-1}$, it follows that for any finite number of elements
$f_j \in \cD_\infty$ one has 
\mbox{$\lim_{\, \iota \in \II} l_\iota(f_j) =  l_\infty(f_j)$},
$j = 1, \dots , n$, for the appropriate choice
of index set $\II \subset \NN$. Since all functionals $l_\iota$ are
real linear on $\cD$, this implies that $\cD_\infty$ is a
real subspace of $\cD$ and that the limit $l_\infty$ is
a real linear functional on this space. Thus the
sequences $l_\iota(f)$, $\iota \in \II$, converge to $l_\infty(f)$ 
for all $f$ in the real linear span of the functions
$f_1, \dots , f_n$.

\medskip
Next, let $f_1 , \dots , f_n \in \cD$ be a finite   
family of functions among which there is
some {$f_k \in \cD \backslash \cD_\infty$}, \ie
$c(l_\infty(f_k)) \in \{ \pm 1 \}$.
As before, there is some subset $\II \subset \NN$
such that $\lim_{\, \iota \in \II} c(l_\iota(f_j)) = c(l_\infty(f_j))$,
$j = 1, \dots , n$. Applying again the inverse function $c^{-1}$, 
it follows that the sequence $l_\iota(f_k)$, $\iota \in \II$,
diverges. As a matter of fact, since  $f_1 , \dots , f_n$
span a finite dimensional space, the sequence of functions 
\be
u_1, \dots , u_n \mapsto \sum_{j = 1}^n u_j \, l_\iota(f_j) \, ,
\quad \iota \in \II \, ,
\ee
tends to $\pm \infty$ for almost all $u_1 , \dots , u_n \in \RR^n$. 

\medskip
With this information, we can turn now to the analysis of the expectation
values of products of
resolvents with functions \mbox{$f_j \in \cD$}, $j = 1, \dots , n$, 
in the states $\omega_\nu$, $\nu \in \NN$. 
Recalling that we may replace the scalar products
in equation \eqref{e.2.4} by their limit values,
we only need to consider the Fourier transforms of
\be
  u_1, \dots , u_n \mapsto
        e^{- \sum_j u_j \lambda_j  
        - (i/2) \sum_{k < l} u_k u_l \, {\mathrm Im}
    \langle f_k, f_l \rangle_\infty 
  - (1/2)  \sum_{k,l} u_k u_l \langle f_k , f_l \rangle_\infty} 
\ee
  at the points $l_\nu(f_1), \dots , l_\nu(f_n)$, $\nu \in \NN$. 
  Since the functions are elements of $L^1(\RR_+^n)$, their
  Fourier transforms are continuous and vanish  at
  infinity. Thus the limit state $\omega_\infty$ 
  can be defined on any sum of products of 
  resolvents as follows: there exists for the functions
  underlying the resolvents 
  a subset $\II \subset \NN$ such that the sequences
  $c(l_\iota(f_j))$, $\iota \in \II$, converge to
  $c(l_\infty(f_j))$, $j = 1, \dots , n$. 
  In view of the preceding results it is then clear that the
  sub-sequence of states $\omega_\iota$, $\iota \in \II$, 
  converges on the products of resolvents to the
  expression given in equation \eqref{e.2.7} whenever 
  $f_1, \dots , f_n \in \cD_\infty$. If
  $f_k \in L^2(\RR^s) \backslash \cD_\infty$
  for some $k$, the corresponding sequence converges to $0$. Thus
  $\omega_\infty$ is a limit point of the sequence
  $\omega_\nu$, $\nu \in \NN$, with
  properties stated in (ii).

  As to part (iii), the statement follows from (i) for gauge    
  invariant sequences of states. In the non-gauge invariant case
  we can proceed as in the proof of part~(ii). Let 
  $\cD_{K \infty} \doteq K \bigcap \cD_\infty$. 
  Since this space is finite dimensional, there exists a
  subset $\II \subset \NN$ such that
  $\lim_{\, \iota \in \II} l_\iota(g) = l_\infty(g)$ for 
  $g \in \cD_{K \infty}$; if $g \in K \backslash \cD_{K \infty}$,
  the sequence $l_\iota(g)$, $\iota \in \II$, diverges.  
  Now let $g_j \in K$, $j = 1, \dots , n$, be an arbitrary finite
  number of functions and consider an $n$-fold product of resolvents
  involving these functions. It follows from the preceding arguments
  that the expectation value of this product in
  the sequence of states $\omega_\nu$, $\nu \in \NN$, vanishes in the limit 
  if $g_k \in K \backslash \cD_{K \infty}$ for some $k$.
  On the other hand, the subsequence of states $\omega_\iota$, $\iota \in \II$,
  converges according to the preceding arguments
  to $\omega_\infty$ on all products of resolvents
  with $g_j \in \cD_{K \infty}$, $j = 1, \dots , n$. This completes the
  proof of the statement. 
\end{proof}

The limit states $\omega_\infty$, established in this proposition, are defined
on the full resolvent algebra $\fR$ and thus on its observable
and field subalgebras $\fA$, respectively~$\fF$. As we have seen, these 
states induce in general non-faithful GNS representations, where
certain specific resolvents are trivially represented. Yet, as can be
inferred from the relations given in the proposition,
the non-trivially represented 
subalgebras $\fR(\cD)$ in case (i) and $\fR(\cD_\infty)$
in case (ii) are regular, \ie the underlying resolvents with
functions $f$ in the respective subspaces of $L^2(\RR^s)$ have trivial kernels. 
This is a consequence of the fact that the expectation \mbox{values}  
of $i\lambda R(\lambda, f)$ converge to $1$ for $\lambda$ tending
to infinity, cf.\  \cite[Prop.~4.5]{BuGr2}.
It is this feature which allows for the continuous unitary
implementation of symmetry transformations of $\fR$ in the
limit representations in cases, where it fails for the Weyl
algebra.

\section{Trapped and untrapped equilibrium states}
\setcounter{equation}{0}

In this section we analyze the properties of Gibbs-von Neumann    
states on the resolvent algebra $\fR$ and of
their thermodynamic and infinite particle number limits. To this
end we proceed for given length $L > 0$, from the
single particle Hamiltonians $h_L \doteq \bP^2 + V_L(\bQ)$, defined on
a suitable domain in $L^2(\RR^s)$. Here $\bP$ denotes the momentum operator,
$\bQ$ the position operator and
\mbox{$\bx \mapsto V_L(\bx) \doteq L^{-2} \, V(\bx/L)$}
is a (scaled) potential, where $V$ is continuous and non-negative,
tending to infinity for large $|\bx|$. To simplify the
discussion, we assume that $L \mapsto V_L(\bx)$
is monotonically decreasing for fixed $\bx \in \RR^s$. 

\medskip
In order to have control
on the spectral properties of $h_L$, we assume that the  
potentials comply with the bounds required in \cite[Thm.\ XIII.81]{ReSi4}.
Simple examples are  $\bx \mapsto L^{-2 - \eta} \, |\bx|^\eta$
for $\eta > 1$.  Thus  $h_L$ is non-negative 
and has discrete spectrum. It is also apparent 
that each $h_L$ is mapped to $(1/L^2) \,  h_1$ by the 
adjoint action of a unitary scale transformation
depending on  $L$.
The ground state energy of $h_L$ is denoted by
$\epsilon_{L,1} = L^{-2} \epsilon_1$,
where $\epsilon_1$ is the positive, non-degenerate ground state energy
of $h_1$ \cite[Thm.\ XIII.47]{ReSi4}.
Applying the estimates on the asymptotic number of
eigenstates of $h_1$, given in the above reference,
all single particle partition functions
\be
\beta \mapsto \text{Tr}_{\cF_1} \, e^{- \beta \, h_L} = 
\text{Tr}_{\cF_1} \, e^{- (\beta / L^2) \, h_1} \, , \quad \beta >  0 \, , 
\ee
turn out to be finite.

\medskip
If $H_L$ is the second quantization of
$h_L$, often denoted by $d \Gamma(h_L)$, it follows from 
standard arguments that the operators
$e^{-\beta (H_L - \mu N)}$ are of trace class on $\cF$ 
for arbitrary inverse temperature $\beta > 0$ and
chemical potential $\mu < \epsilon_{L, 1}$.
Thus, putting $Z = \text{Tr}_\cF \, e^{-\beta (H_L - \mu N)}$, the
corresponding states on the resolvent algebra, 
\be \label{e.3.2}
\omega_{\beta, \mu, L}(R) \doteq
Z^{-1} \, \text{Tr}_\cF \, e^{- \beta (H_L - \mu N)} R \, , \quad R \in \fR \, ,
\ee
are trapped, gauge invariant equilibrium states     
(Gibbs-von Neumann states). 
They satisfy the KMS condition on $\fR$ with regard to the
adjoint action of the unitary group $t \mapsto e^{it(H_L - \mu N)}$
for the given $\beta$ and $\mu$. Note that these actions
coincide on the gauge invariant subalgebra $\fA \subset \fR$ for all 
values of the chemical potential~$\mu$.

\medskip
It is well known that these equilibrium states determine
quasifree states on the Weyl algebra and, since they are regular,
also on the resolvent algebra. As a matter of fact, any KMS state
on the Weyl algebra for given non-interacting time evolution is a
quasifree state under quite general conditions \cite{RoSiTe}.
In the case at hand the equilibrium states are fixed by the scalar
products, $ f,g  \in L^2(\RR^s) $,
\be \label{e.3.3} 
\langle f, g \rangle_{\beta, \mu, L} =
(1/2) \big( \langle f ,  e^{\beta (h_L - \mu)}
(e^{\beta (h_L - \mu)} -1)^{-1} g \rangle +
\langle g, \, (e^{\beta (h_L - \mu)} -1)^{-1} f \rangle \big) \, .
\ee

The thermodynamic limit is obtained by letting $L$ tend to infinity,
where the limit of the single particle Hamiltonians is denoted by
$h_\infty = \bP^2$ and the ground state energies are replaced by
their limit $\epsilon_{\infty , 1} = 0$. 
The subsequent lemma enters in our proof that   
the limits of certain basic sequences of equilibrium states exist
on the resolvent algebra and satisfy the KMS condition 
for the corresponding limit dynamics. 
\begin{lemma} \label{l.3.1} 
  Let $\langle \, \cdot \, , \, \cdot \, \rangle_{\beta, \mu, L}$ be the
  scalar products defined above.
\begin{itemize}
\item[(i)]
  Let $\beta > 0$, $\mu < 0$, and $t \in \RR$. Then
  \be \label{e.3.4}
  \lim_{L \rightarrow \infty} \, \langle f , e^{it(h_L - \mu)} g \rangle_{\beta, \mu, L}
  = \langle f , e^{it(h_\infty - \mu)} g \rangle_{\beta, \mu, \infty} \, , 
  \quad f,g \in L^2(\RR^s) \, .
  \ee
\item[(ii)]  Let $L \in \RR_+ \cup +\infty$. There exists a complex
  subspace $\cD_L \subset L^2(\RR^s)$, which is stable under
  the action of $e^{ith_L}$, $t \in \RR$, such that the limits
  \be \label{e.3.5} 
  \lim_{\mu \nearrow \epsilon_{L,1}}  \, \langle f ,
  e^{it(h_L - \mu)} g \rangle_{\beta, \mu, L}
  \doteq \langle f ,  e^{it(h_L - \epsilon_{L,1})} g
  \rangle_{\beta, \epsilon_{L,1}, L} \, , \quad f \in \cD_L \, , 
  \ee
exist and are continuous in $t$. On the other hand, 
  \be 
  \lim_{\mu \nearrow \epsilon_{L,1}} \,
  \langle f , f \rangle_{\beta, \mu, L}
  = + \infty \, , \quad f \in L^2(\RR^s) \backslash \cD_L  \, .
  \ee
  If $L < \infty$,  
  $\cD_L = (1 - E_{L,1}) \, L^2(\RR^s)$, where $E_{L,1}$ is the
  projection onto the ground state of $h_L$. If $L = + \infty$, 
  $\cD_L$ is the domain of~$|\bP|^{-1}$. 
  \end{itemize}
\end{lemma}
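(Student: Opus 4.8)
The plan is to express everything through the spectral calculus of $h_L$. Writing $A=e^{\beta(h_L-\mu)}$ and using $A(A-1)^{-1}=1+(A-1)^{-1}$ together with the selfadjointness of $(A-1)^{-1}$, the scalar product \eqref{e.3.3} acquires the diagonal form $\langle f,f\rangle_{\beta,\mu,L}=\tfrac12\langle f,\coth(\tfrac\beta2(h_L-\mu))f\rangle$, while $\langle f,e^{it(h_L-\mu)}g\rangle_{\beta,\mu,L}$ is a sum of two matrix elements between $f$ and $g$ of bounded functions of $h_L$, obtained by multiplying the thermal factors $1+(A-1)^{-1}$ and $(A-1)^{-1}$ by the unitaries $e^{\pm it(h_L-\mu)}$. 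All estimates below are read off from the corresponding scalar functions of the spectral variable $x\ge0$. For part (i), fix $\mu<0$. I would first show $h_L\to\bP^2$ in the strong resolvent sense: on the common core $C_0^\infty(\RR^s)$ one has $h_Lf-\bP^2f=V_Lf\to0$ in $L^2(\RR^s)$, since on the support of $f$ the potential is bounded by $L^{-2}\sup_{|\by|\le R/L}V(\by)\to0$, and the standard core criterion then yields strong resolvent convergence. Because $\mu<0$ and $h_L\ge0$, the spectrum of $h_L-\mu$ is bounded below by $-\mu>0$, so the thermal factors stay away from the singularity of $\coth$ at the origin; the factor $(A-1)^{-1}$ is continuous and vanishes at infinity, and $e^{it(h_L-\mu)}$ is a unitary, all uniformly bounded in $L$. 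Strong resolvent convergence then gives strong convergence of each of these functions of $h_L$ to the same function of $\bP^2$, hence, by uniform boundedness, convergence of the products and of the matrix elements, establishing \eqref{e.3.4}.

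For part (ii) with $L<\infty$, the operator $h_L$ has purely discrete spectrum $\epsilon_{L,1}<\epsilon_{L,2}\le\cdots$ with non-degenerate ground state. Expanding $f,g$ in the eigenbasis, the $k$-th contribution carries the factor $(e^{\beta(\epsilon_{L,k}-\mu)}-1)^{-1}$, which converges as $\mu\nearrow\epsilon_{L,1}$ for $k\ge2$ but diverges for $k=1$. Setting $\cD_L=(1-E_{L,1})L^2(\RR^s)$, on which the $k=1$ mode is absent, each surviving factor is bounded for $\mu\le\epsilon_{L,1}$ by its monotone limit at $\mu=\epsilon_{L,1}$, so dominated convergence of the eigenexpansion gives the limit \eqref{e.3.5}; stability of $\cD_L$ under $e^{ith_L}$ is immediate since it is the complement of an eigenspace, and continuity in $t$ follows because the limiting kernel is a bounded function of $h_L$ composed with the strongly continuous group $t\mapsto e^{it(h_L-\epsilon_{L,1})}$. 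If instead $E_{L,1}f\neq0$, the $k=1$ term of $\langle f,f\rangle_{\beta,\mu,L}=\tfrac12\sum_k|f_k|^2\coth(\tfrac\beta2(\epsilon_{L,k}-\mu))$ tends to $+\infty$, which gives the stated divergence off $\cD_L$.

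The case $L=+\infty$ is where I expect the real work, since $h_\infty=\bP^2$ has no spectral gap. In the momentum representation $\langle f,f\rangle_{\beta,\mu,\infty}=\tfrac12\int\coth(\tfrac\beta2(p^2-\mu))\,|\hat f(p)|^2\,dp$, and the behavior as $\mu\nearrow0$ is governed entirely by the infrared region $p\to0$. From $\coth(x)=x^{-1}+O(x)$ the integrand behaves like $(\beta p^2)^{-1}|\hat f(p)|^2$ near the origin, so the limit is finite precisely when $\int|p|^{-2}|\hat f(p)|^2\,dp<\infty$, that is, when $f$ lies in the domain of $|\bP|^{-1}$; this identifies $\cD_\infty$, and its stability under $e^{it\bP^2}$ is clear since this acts by a phase in momentum space. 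For $f\in L^2(\RR^s)\backslash\cD_\infty$ the integrand increases monotonically as $\mu\nearrow0$, so monotone convergence together with $\coth(x)\ge x^{-1}$ gives $+\infty$. For $f,g\in\cD_\infty$ the integrand of the time-evolved scalar product is bounded in modulus, uniformly in $\mu\le0$, by $|\hat f(p)||\hat g(p)|\coth(\tfrac\beta2 p^2)$, which is integrable by Cauchy-Schwarz and the defining condition on $\cD_\infty$; dominated convergence then yields \eqref{e.3.5}, and the same $t$-independent bound gives continuity in $t$.

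The one delicate point running through all cases is the interchange of the limit with the spectral integral (or eigenexpansion): the monotonicity of the thermal factors in $\mu$ furnishes a $\mu$-uniform dominating function, and the definition of $\cD_L$ is precisely what makes that dominating function integrable, so this is where the choice of subspace does its work. I would therefore organize the argument so that the identification of $\cD_L$ and the verification of integrability of the dominating kernel are the central steps, with the continuity in $t$ and the stability of $\cD_L$ following as routine corollaries.
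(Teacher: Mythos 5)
Your proof is correct: all three convergence/divergence claims go through, and the diagonal form $\langle f,f\rangle_{\beta,\mu,L}=\tfrac{1}{2}\langle f,\coth(\tfrac{\beta}{2}(h_L-\mu))f\rangle$ you derive is exactly the kernel the paper itself uses later (cf.\ its equation for $\langle f,f\rangle_{\beta,\mu,\infty}$ in Sect.~5). The overall strategy — spectral calculus, the gap above the ground state for finite $L$, infrared analysis at $L=\infty$, monotonicity in $\mu$ furnishing the dominating function — coincides with the paper's, but your technical devices differ at three points, in ways worth recording. In part (i) the paper exploits the standing assumption that $L\mapsto V_L(\bx)$ is monotonically decreasing: the resolvents $(h_L-\mu)^{-1}$ then increase monotonically and converge strongly to $(h_\infty-\mu)^{-1}$, after which functional calculus with functions that are bounded and continuous on the closure of $\RR_+$ (possible precisely because $\mu<0$) finishes the argument; you instead obtain strong resolvent convergence from the common-core criterion on $C_0^\infty(\RR^s)$, using only the local uniform convergence $V_L\to 0$. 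Your route does not need the monotonicity hypothesis at all, which makes it slightly more robust, at the price of invoking Trotter's theorem to convert strong resolvent convergence into strong convergence of the unitary factors $e^{it(h_L-\mu)}$ (which, unlike the thermal factors, neither vanish nor have a limit at spectral infinity — you correctly peel them off as separate uniformly bounded factors). For finite $L$ the paper argues at the operator level: the gap gives $(h_L-\mu)(1-E_{L,1})\geq\delta(1-E_{L,1})$ near $\mu=\epsilon_{L,1}$, whence $\mu\mapsto(e^{\beta(h_L-\mu)}-1)^{-1}(1-E_{L,1})$ and its companion are \emph{norm} continuous at $\mu=\epsilon_{L,1}$; your eigenexpansion with the $\mu$-monotone dominating sequence gives the same conclusion pointwise in $f,g$, a bit more pedestrianly but equally validly (your uniform-in-$k$ bound by the $k=2$ factor is the same use of the gap). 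For $L=\infty$ the paper factorizes $\langle f,(e^{\beta(h_\infty-\mu)}-1)^{-1}g\rangle$ through $h_\infty^{-1/2}=|\bP|^{-1}$ and uses strong convergence of the bounded operators $(e^{\beta(h_\infty-\mu)}-1)^{-1}h_\infty$ as $\mu\nearrow 0$, while you compute in momentum space with the dominating kernel $|\widetilde f(\bp)||\widetilde g(\bp)|\coth(\tfrac{\beta}{2}\bp^2)$, integrable via $\coth x\leq 1+x^{-1}$ and Cauchy--Schwarz; these are the same estimate in different clothing, and your version has the merit of making the infrared mechanism, and hence the identification of $\cD_\infty$ with the domain of $|\bP|^{-1}$, completely explicit. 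The divergence statements ($\coth x\geq x^{-1}$ plus monotone convergence, respectively the $k=1$ eigenterm) match the paper's lower-bound arguments essentially verbatim.
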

\begin{proof}
  (i) The monotonicity of the potentials      
  implies that $V_{L_1} \geq V_{L_2} \geq 0$, hence
  $(h_{L_1} - \mu) \geq (h_{L_2} - \mu) > (h_\infty - \mu) $ 
  if $L_2 \geq L_1$. Proceeding to the inverses, the
  sequence of resolvents $(h_L - \mu)^{-1}$ is
  monotonically increasing with increasing $L$ and
  converges to $(h_\infty - \mu)^{-1}$ in the strong
  operator topology. Now the functions
  $\epsilon \mapsto e^{\beta (\epsilon - \mu)} 
  (e^{\beta(\epsilon - \mu)} - 1)^{-1} e^{it(\epsilon - \mu)} $
  and $\epsilon \mapsto
  e^{it(\epsilon - \mu)} (e^{\beta(\epsilon - \mu)} - 1)^{-1}$
  are bounded and continuous on the closure of 
  $\RR_+$ since $\mu < 0$.
  It therefore follows from functional calculus 
  that the corresponding sequences of operators,
  where $\epsilon$ is replaced by~$h_L$, converges
  for large $L$ in the strong
  operator topology on $L^2(\RR^s)$
  to the operator, where $\epsilon$ is replaced
  by~$h_\infty$. This completes the proof of part~(i).
  
  \medskip
  (ii) Let $L$ be finite.    
  Since the ground state of $h_L$ is simple
  and isolated from the rest of the spectrum, there 
  is some $\delta > 0$ such that for $\mu$ in a neighborhood of 
  $\epsilon_{L,1}$ one has
  $(h_L - \mu) \, (1 - E_{L,1}) \geq \delta \, (1 - E_{L,1})$. Hence 
  the operator functions 
  $\mu \mapsto  (e^{\beta (h_L - \mu)} - 1)^{-1} (1 - E_{L,1})$
  as well as
  $\mu \mapsto e^{\beta (h_L - \mu)}  (e^{\beta (h_L - \mu)} - 1)^{-1} (1 - E_{L,1})$
  are norm continuous at $\mu = \epsilon_{L,1}$. So 
  the scalar products in
  equation \eqref{e.3.5} converge for $f,g \in (1 - E_1) L^2(\RR^s)$. 
  On the other hand,
  given any $f \in L^2(\RR^s)$ such that $E_{L,1} \, f \neq 0$ one has
  \be
  \langle f, (e^{\beta (h_L - \mu)} - 1)^{-1} f \rangle
  \geq (e^{\beta (\epsilon_{L,1} - \mu)} - 1)^{-1} \langle f, E_{L,1} f \rangle \, ,
  \ee
  which tends to $+\infty$ in the limit $\mu  \nearrow \epsilon_{L,1}$. Thus
  $\cD_L = (1 - E_{L,1}) \, L^2(\RR^s)$, 
  being stable under the action of $e^{ith_L}$, $t \in \RR$,
  is the maximal complex subspace on which relation \eqref{e.3.5}
  holds.

  Next, let $L = + \infty$. The operator function
  $\mu \mapsto 
  (e^{\beta (h_\infty - \mu)} - 1)^{-1}  h_\infty$
  converges in the limit $\mu \nearrow 0$
  in the strong operator topology to the bounded operator
  $(e^{\beta h_\infty} - 1)^{-1}  h_\infty $. So 
  for $f,g$ in the domain of $|\bP|^{-1} = h_\infty^{- 1/2}$ the
  scalar products in \eqref{e.3.5} also converge in this
  limit. On the other hand, if
  $f$ does not lie in this domain, it follows from
  the lower bound of the function
  \be
  \epsilon \mapsto
  (e^{\beta \epsilon} - 1)^{-1} \geq (\beta \epsilon )^{-1}
  e^{-\beta \epsilon } \, , \quad  \epsilon > 0 \, ,
  \ee
  that the expectation values
  $\langle f, (e^{\beta (h_\infty - \mu)} - 1)^{-1} f \rangle$ diverge 
  in the limit $\mu \nearrow 0$. Thus $\cD_L$ is equal to the
  domain of $| \bP |^{-1}$, completing the proof.  
\end{proof}
  
With the help of this lemma and results obtained in the preceding section
we arrive at the following statement on pertinent 
limits of equilibrium states. 

\begin{proposition} \label{p.3.2} 
  Let 
  $\omega_{\beta, \mu, L}$ be the equilibrium states on the resolvent 
  algebra $\fR$, defined in equation \eqref{e.3.2} for $\beta > 0$,
  $\mu < \epsilon_{L,1}$ and $L > 0$.
  \begin{itemize}
  \item[(i)] Let $\beta > 0, \, \mu < 0$. The thermodynamic limit of
    the states exists, pointwise on $\fR$,
    \be
    \lim_{L \rightarrow \infty} \omega_{\beta, \mu, L}(R) =
    \omega_{\beta, \mu, \infty}(R) \, , \quad R \in \fR \, . 
    \ee
    The limit states $\omega_{\beta, \mu, \infty}$ satisfy the 
    KMS condition at inverse temperature $\beta$ and
    chemical potential $\mu$ for the dynamics induced by the
    adjoint action of the unitary group $t \mapsto e^{it(H_\infty - \mu N)}$
    on $\fR$, where $H_\infty = d\Gamma(h_\infty)$.
  \item[(ii)] Let $L \in \RR_+ \cup \, +\infty$ be fixed and
    let $\beta > 0, \, \mu < \epsilon_{L,1}$. The infinite 
    particle number limit of the states exists, pointwise on $\fR$,
    \be
    \lim_{\mu \nearrow \epsilon_{L,1}} \omega_{\beta, \mu, L}(R) =
    \omega_{\beta, \epsilon_{L,1}, L}(R) \, , \quad R \in \fR \, .
    \ee
    The limit state $\omega_{\beta, \epsilon_{L,1}, L}$ satisfies the 
    KMS condition at inverse temperature $\beta$ 
    for the dynamics induced by the
    adjoint action of the unitary group
    \mbox{$t \mapsto e^{it(H_L - \epsilon_{L,1} N)}$} 
    on $\fR$. The subalgebra $\fR(\cD_L) \subset \fR$, which is 
    generated by resolvents assigned to functions in the complex subspace
    $\cD_L \subset L^2(\RR^s)$, defined in the preceding lemma,
    is stable under the dynamics. Its GNS representation, induced 
    by the limit state, is regular. All resolvents assigned to elements of
    $L^2(\RR^s) \backslash \cD_L$ are trivially represented there
    and thus generate an ideal of $\fR$.
  \item[(iii)] 
    The states $\omega_{\beta, \mu, \infty}$, $\mu \leq 0$, are
    mixing, \viz for any $R_1, R_2 \in \fR$ one has
    \be
    \lim_{t \rightarrow \infty}
    \omega_{\beta, \mu, \infty}(R_1 \, \ad{e^{it(H_\infty - \mu N)}}(R_2))
      = \omega_{\beta, \mu, \infty}(R_1) \, \omega_{\beta, \mu, \infty}(R_2) \, .
    \ee
    So these states describe pure phases. 
  \end{itemize}
\end{proposition}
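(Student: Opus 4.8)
The plan is to read off all three assertions from the two tools already assembled, namely the abstract limit machinery of Proposition~\ref{p.2.1} and the explicit convergence of the (time-evolved) scalar products in Lemma~\ref{l.3.1}. The common input is that each $\omega_{\beta,\mu,L}$ is a gauge-invariant quasifree state determined by \eqref{e.3.3}, that the dynamics acts on generators by $\ad{e^{it(H_L-\mu N)}}(R(\lambda,f)) = R(\lambda, e^{it(h_L-\mu)}f)$, and that---since the test-function space is all of $L^2(\RR^s)$, invariant under every $e^{ith_L}$---each such dynamics is a genuine automorphism group of $\fR$. I shall also use repeatedly that each $\omega_{\beta,\mu,L}$ is already known to be a KMS state, so that for products of resolvents $A,B$ the function $z \mapsto \omega_{\beta,\mu,L}(A\,\ad{e^{iz(H_L-\mu N)}}(B))$ is analytic on the strip $0 < \Imm{z} < \beta$ and bounded there by $\|A\|\,\|B\|$.

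For part (i) I would first put $t=0$ in Lemma~\ref{l.3.1}(i) to see that the scalar products converge on all of $L^2(\RR^s)$, so that the subspace $\cD$ of Proposition~\ref{p.2.1} is the whole space; being gauge invariant, the states then converge pointwise on $\fR$ to the quasifree state $\omega_{\beta,\mu,\infty}$ with scalar product $\langle\,\cdot\,,\,\cdot\,\rangle_{\beta,\mu,\infty}$. For the KMS property I would pass to the limit in the KMS functions: Lemma~\ref{l.3.1}(i) shows that, for every real $t$, the boundary values $\omega_{\beta,\mu,L}(A\,\ad{e^{it(H_L-\mu N)}}(B))$ and $\omega_{\beta,\mu,L}(\ad{e^{it(H_L-\mu N)}}(B)\,A)$ converge. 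The functions on the strip being uniformly bounded, the Vitali--Montel theorem yields a locally uniform limit which is analytic, bounded, and carries exactly the two boundary values $\omega_{\beta,\mu,\infty}(A\,\ad{e^{it(H_\infty-\mu N)}}(B))$ and $\omega_{\beta,\mu,\infty}(\ad{e^{it(H_\infty-\mu N)}}(B)\,A)$; this is the KMS condition for $\omega_{\beta,\mu,\infty}$ relative to $t \mapsto \ad{e^{it(H_\infty-\mu N)}}$.

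Part (ii) runs in parallel, now feeding the subspace $\cD_L$ of Lemma~\ref{l.3.1}(ii) into Proposition~\ref{p.2.1}(i): the scalar products converge on $\cD_L$ and diverge off it, so the states converge pointwise, the limit being given by the Laplace formula on $\fR(\cD_L)$ and annihilating every product that contains a resolvent with test function in $L^2(\RR^s)\backslash\cD_L$. Since Lemma~\ref{l.3.1}(ii) also supplies the convergence of the time-evolved scalar products on $\cD_L$, the very same Vitali--Montel argument establishes the KMS condition for $t \mapsto \ad{e^{it(H_L-\epsilon_{L,1}N)}}$. Stability of $\fR(\cD_L)$ is the invariance of $\cD_L$ under $e^{ith_L}$ asserted in the lemma; for finite $L$ the commutation of $e^{ith_L}$ with $E_{L,1}$ shows in addition that $L^2(\RR^s)\backslash\cD_L$ is mapped into itself, so the resolvents attached to it span a dynamically invariant ideal lying in the kernel of the GNS representation. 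Regularity of that representation on $\fR(\cD_L)$ is the general remark following Proposition~\ref{p.2.1}, that $i\lambda R(\lambda,f)\to 1$ as $\lambda\to\infty$.

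For part (iii) I would exploit the quasifree form to reduce mixing to clustering of the two-point function. With $R_1,R_2$ written as products of resolvents, formula \eqref{e.2.4} expresses $\omega_{\beta,\mu,\infty}(R_1\,\ad{e^{it(H_\infty-\mu N)}}(R_2))$ as a Laplace transform of Weyl expectations in which the only $t$-dependence sits in the cross terms $\langle f_i, e^{it(h_\infty-\mu)}g_j\rangle$ and $\langle f_i, e^{it(h_\infty-\mu)}g_j\rangle_{\beta,\mu,\infty}$ linking the two factors. As $h_\infty=\bP^2$ has purely absolutely continuous spectrum, each is an oscillatory integral $\int \overline{\hat f_i(p)}\,\hat g_j(p)\,w(p)\,e^{it(p^2-\mu)}\,dp$ whose weight $w$ is bounded and smooth when $\mu<0$, so the Riemann--Lebesgue lemma (in the energy variable $|p|^2$) forces it to $0$ as $t\to\infty$; for the boundary value $\mu=0$ the restriction to the domain of $|\bP|^{-1}$ keeps $\overline{\hat f_i}\,\hat g_j\,w$ in $L^1$ near $p=0$, while off that domain both sides vanish identically. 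Hence the coupling exponent in \eqref{e.2.4} factorizes in the limit, the integral splits, and the right-hand side tends to $\omega_{\beta,\mu,\infty}(R_1)\,\omega_{\beta,\mu,\infty}(R_2)$; mixing then entails ergodicity, so the state is an extremal KMS state, that is, a pure phase. I expect the one delicate point to be precisely the transport of the KMS condition through the limit: pointwise convergence of states does not by itself preserve analyticity on the strip, and the argument above leans essentially on the uniform bound $\|A\|\,\|B\|$ for KMS functions together with the boundary convergence supplied by Lemma~\ref{l.3.1}.
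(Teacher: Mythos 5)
Your proposal is correct in substance, and much of it coincides with the paper's proof: the existence of both limits is obtained, exactly as in the paper, by feeding Lemma \ref{l.3.1} into Proposition \ref{p.2.1}(i) (with $\cD = L^2(\RR^s)$ in part (i) and $\cD = \cD_L$ in part (ii)), the regularity and ideal statements are the remark at the end of Sect.~2 plus the dynamical invariance of $\cD_L$ and its complement, and your part (iii) is the paper's argument verbatim — decay of $\langle f, e^{it(h_\infty - \mu)} g \rangle_{\beta,\mu,\infty}$ from the absolutely continuous spectrum of $\bP^2$, inserted into \eqref{e.2.4}, kills the cross terms and factorizes the integral (your Cauchy--Schwarz check that the weight stays integrable near $\bp = 0$ on the domain of $|\bP|^{-1}$ at $\mu = 0$ is a useful explicit supplement the paper leaves implicit). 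Where you genuinely diverge is the KMS step. The paper does not transport the KMS property from the approximants: it verifies it directly for the limit state, observing that the limit two-point function $t \mapsto \langle f, e^{ith} g \rangle_\beta$ extends to a bounded analytic function on the strip $\{0 \leq \Imm{z} \leq \beta\}$ with boundary values $\langle e^{ith} g, f \rangle_\beta$ at the upper rim, and then lifts this continuation through the Laplace formula \eqref{e.2.4} by dominated convergence; the same computation settles part (ii) on $\fR(\cD_L)$, products containing a resolvent from $L^2(\RR^s) \backslash \cD_L$ being annihilated at all times. Your route — uniform bound $\|A\|\,\|B\|$ for the strip functions of the Gibbs states plus convergence of the boundary values from Lemma \ref{l.3.1} — also works and is more conceptual, but it needs two repairs. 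First, Vitali--Montel as literally invoked requires convergence on a set with an accumulation point in the \emph{open} strip, which boundary convergence alone does not provide; the standard fix is to represent each strip function by the Poisson integral of its two boundary traces (as in the usual proof that weak-* limits of KMS states for a fixed dynamics are KMS) and pass to the limit under that integral. Second, since $f \mapsto R(\lambda, f)$ is not norm continuous, the convergence of the boundary values with $L$- (respectively $\mu$-) dependent dynamics cannot be argued via norm approximation of $\ad{e^{it(H_L - \mu N)}}(R_2)$; it must go through the quasifree formula \eqref{e.2.4} with the time-evolved scalar products of Lemma \ref{l.3.1} — at which point you are performing essentially the paper's dominated-convergence computation anyway. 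You correctly identified this transport as the delicate point; with the Poisson-kernel repair your argument is complete.
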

\noindent{\bf Remark:} The limit states on $\fR$ in this    
proposition are not normal relative to the Fock representation.
But the adjoint action of the given unitary time translations on
$\cF$ leaves $\fR$ invariant,
thereby defining automorphisms of this algebra.
The action of these automorphisms on $\fR$ is implemented in the
limit representations by the adjoint action of modified unitary 
operators, generated by Liouvillians which depend on
the thermal parameters.

  \begin{proof}
    (i) The existence of the limit follows from the first parts of
    Lemma \ref{l.3.1} and Proposition \ref{p.2.1}; note that
    the scalar products in the lemma converge on all of $L^2(\RR^s)$. 
    The proof that the limit states satisfy the KMS condition
    is obtained by standard arguments. For subsequent reference,
    we sketch it here for arbitrary unitary groups 
    $t \mapsto e^{it h}$ on  $L^2(\RR^s)$
    with positive generator $h$, leaving some complex 
    subspace \mbox{$\cD \subset L^2(\RR^s)$} invariant. 
    Given $\beta > 0$, the corresponding scalar products are 
    \be
    \langle f, g \rangle_\beta =
    (1/2) \big( \langle f, e^{\beta h} (e^{\beta h} - 1)^{-1} g \rangle
    + \langle g, (e^{\beta h} - 1)^{-1} f \rangle \big) \, ,
    \quad f,g \in \cD \, .
    \ee
    Thus the functions $t \mapsto \langle f, e^{ith} g \rangle_\beta$
    can continuously be extended to the strip
    \mbox{$\{ z \in \CC : 0 \leq \Imm{z} \leq \beta \}$},
    are analytic in its interior, and bounded. Their boundary
    values at the upper rim of the strip are 
    $t \mapsto \langle f, e^{(it - \beta) h } g \rangle_\beta =
    \langle e^{ith} g, f \rangle_\beta$. So these
    scalar products, representing the
    two-point function in the corresponding quasifree state,
    satisfy the KMS condition. Making use of this information in
    equation~\eqref{e.2.4}, cf.\ also equation \eqref{e.kms},
    it follows once again
    from the dominated convergence theorem that
    the correlation functions of products of resolvents in
    $\fR(\cD)$ are continuous in time
    and satisfy the KMS condition at inverse temperature~$\beta$.
    Since the  sums of products of resolvents are stable
    under the action of the dynamics and dense in~$\fR$, 
    the KMS condition holds on this algebra. 
    In the case at hand $\cD = L^2(\RR^s)$,
    completing the proof of the first part. 

\medskip
(ii) The convergence of the functionals now follows from the second   
part of Lemma~\ref{l.3.1} and the first part of Proposition \ref{p.2.1}. 
Moreover, resolvents containing functions
$f \in L^2(\RR^s) \backslash \cD_L$  are trivially represented in
the GNS representations induced by the limit state. So they 
generate an ideal in $\fR$. The subalgebra 
$\fR(\cD_L) \subset \fR$ is regularly
represented, cf.\ the remarks at the end of Sect.~2.
Since $\cD_L$ is stable under the limit dynamics, products of  
resolvents containing a
factor with a function $f \in L^2(\RR^s) \backslash \cD_L$
are annihilated in the state at all times. The KMS condition
is then trivially satisfied. For products containing 
only resolvents with functions in $\cD_L$,  
the KMS property follows from the preceding argument.

\medskip
(iii) Given $\mu \leq 0$, let $\cD_{\mu, \infty} \subset L^2(\RR^s)$ be    
the subspace fixed by the state $\omega_{\beta, \mu, \infty}$, cf.\ the
preceding steps.  The statement
holds trivially if in expectation values of products of
resolvents in this state one of the underlying functions is not contained
in $\cD_{\mu, \infty}$. If $f,g \in \cD_{\mu, \infty}$ it follows from the
familiar spectral properties of $h_\infty = \bP^2$ that $\lim_{t \rightarrow \infty}
\langle f, e^{it(h_\infty - \mu)} g \rangle_{\beta, \mu} = 0$.
Making use of this information in relation \eqref{e.2.4}
one sees that products of resolvents at different
times in the state, involving only functions
in $\cD_{\mu, \infty}$, become uncorrelated in the limit of large time
differences. This completes the proof of the statement.
\end{proof}

  In the remainder of this section we    
  construct states out of the preceding ones by adding
  to the quantum field some classical background field.
  Let $f \mapsto l(f)$ be a 
  real linear functional on $L^2(\RR^s)$. Plugging this 
  functional into equation~\eqref{e.2.4}, where the scalar product 
  is given by equation~\eqref{e.3.3}, one obtains quasifree states
  $\omega_{\beta, \mu, L \, ; \, l}$ on the resolvent algebra.
  They are normal with regard to the Fock \mbox{representation} if
  $l(f) = \Imm{\langle e, f \rangle}$ for
  some $e \in L^2(\RR^s)$ and are  
  obtained by composing the KMS states
  $\omega_{\beta, \mu, L}$ with the adjoint action
  $\ad{W(e)} \doteq W(e)^* \, \cdot \, W(e)$ of the Weyl operator
  corresponding to $e$. 
  The resulting states are neither gauge invariant, nor stationary,
  and consequently do not satisfy the KMS condition.

\medskip 
Being interested in the phenomenon of condensation,   
we restrict our attention to certain specific functionals,
which are determined by eigenvectors of $h_L$. 
Let  $e_{L, k}$ be such an eigenvector corresponding to the eigenvalue
  $\epsilon_{L, k} = L^{-2} \epsilon_k$, where
  $\epsilon_k$ is the $k$-th eigenvalue of $h_1$, 
  and let $l_{L, k}(f) \doteq \Imm{\langle e_{L,k} , f \rangle}$,
  $k \in \NN$. The time dependence of the resulting expectation values
  in the states $\omega_{\beta, \mu, L \, ; \, l_{L,k}}$ 
  is conveniently determined for Weyl operators $W(f)$,
  cf.\ equation~\eqref{e.2.2}, 
  \be  \label{e.3.13} 
  t \mapsto  \omega_{\beta, \mu, L \, ; \, l_{L,k}} \big(
    \ad{e^{it(H_L - \mu N)}}\big( W(f) \big) \big) = 
    e^{i l_{L,k}(t)(f)} \, e^{-(1/2) \langle f, f \rangle_{\beta, \mu, L}} \, .
  \ee 
  Here
  \be
  l_{L,k}(t)(f) =
  \Imm{\langle e_{L,k} , e^{it(h_L - \mu)} f \rangle}
  = \Imm{e^{it(\epsilon_{L,k} - \mu)} \langle e_{L,k} , f \rangle} \, .
  \ee
  So the states change periodically in time with period
  $\tau \doteq 2 \pi / (\epsilon_{L,k} - \mu)$. 
  Taking a mean of these states over the time 
  interval $\tau$, one  obtains the averaged
  states 
  \be \label{e.3.15}
  \overline{\omega}_{\beta, \mu, L \, ; \, l_{L,k}} \doteq
  (1/\tau) \int_0^\tau \! dt \,
  \omega_{\beta, \mu, L \, ; \, l_{L,k}} \scirc \, \ad{e^{it(H_L - \mu N)}} \, .
  \ee
  In expectation values of Weyl operators,    
  the phase factor in equation \eqref{e.3.13} is then replaced
  by
  \be
  (1/\tau) \int_0^\tau \! dt \, e^{i l_{L,k}(t)(f)} \, = \, 
    J_0(\mbox{\footnotesize $| \langle e_{L,k} , f \rangle |$}) \, ,
  \ee
  $J_0$~being the Bessel function of first kind of
  order zero. In case of products of resolvents which are 
  evaluated in the
  averaged states $\overline{\omega}_{\beta, \mu, L \, ; \, l_{L,k}}$, 
  one has to replace the phase factor
  $e^{- i \, l_{L,k}( \sum_j u_j f_j)}$ appearing in relation~\eqref{e.2.4} by
  its time average $J_0(|\langle e_{L,k}, \sum_j u_j f_j \rangle |)$,
  called Bessel factor henceforth. 
  The resulting state is gauge invariant and
  stationary, but it does not satisfy the
  KMS condition, cf.\ Proposition \ref{p.3.3} below.
  One can remedy this deficiency by proceeding
  to the thermodynamic limit.

\medskip 
The time averages \eqref{e.3.15} of the states    
affect in general also their domains of regularity in the
thermodynamic limit. To see this, let $l_{L,k}$
be a (generalized) sequence of functionals 
which approximates $l_{\infty,k}$ in the limit of large $L$
and diverges for some $f \in L^2(\RR^s)$.
Then the resulting averaged state vanishes on products
of resolvents containing a resolvent
with a function $ \zeta f$ for some    
$\zeta \in \TT$. This follows from the decay properties
of the Bessel function for large arguments.
Thus the limit functional $l_{\infty, k}$ 
contributes to the thermodynamic limit of the averaged state
only for elements in the maximal complex subspace of its domain, 
denoted by~$\overline{\cD}_\infty \doteq \bigcap_{\, \zeta  \in \TT} \,
  \zeta \, \cD_\infty$. 
  
  \medskip
  We turn now to the determination of the thermodynamic as well as    
  infinite particle number limits of the states
  $\overline{\omega}_{\beta, \mu, L \, ; \, l_{L,k}}$. The underlying scalar
  products $\langle \, \cdot \, , \, \cdot \, \rangle_{\beta, \mu, L}$
  were already analyzed in Lemma \ref{l.3.1}. So
  we must only have a closer look at the functionals
  $f \mapsto l_{L,k}(f)$. As discussed in Sec.\ 2, the functionals 
  $l_{L,k}(f)$ approach their limits $l_{\infty,k}(f)$ in general only  
  in the weak topology, \viz for any given finite number of
  functions $f$ one has to proceed to specific subsequence 
  of the scaling factors $L_\iota$, \mbox{$\iota \in \II$}. 
  In order to retain control 
  of the continuity properties of the limit states with regard to the  
  time translations, they need to be adjusted
  in the  approximating correlation functions 
  according to the respective scale $L$.
  This is also natural from the point of view of physics   
  where one studies the dynamical properties of
  systems for differing external
  constraints, such as the length scale~$L$. 
  Applying this rule to the functionals, it follows from
  the equality 
  \be 
   l_{L,k}(e^{it h_{L,k}}f) =
  \cos(t \epsilon_k/L^2) \, l_{L,k}(f)
  + \sin(t \epsilon_k/L^2) \, l_{L,k}(i f) \, ,
  \quad f \in \overline{\cD}_\infty \, ,
    \ee
    that
    $\lim_{\iota \in \II}  l_{L_\iota,k}(e^{it h_{L_\iota,k}}f) =
    l_{\infty,k}(f)$, $t \in \RR$. So the limit of these particular 
    functionals, contributing to the correlation functions
    in the thermodynamic limit, turns out to be invariant
    under the action of the limit dynamics.
   
  \begin{proposition} \label{p.3.3}
    Let $\overline{\omega}_{\beta, \mu, L \, ; \, l_{L,k}}$ be the
    averaged states on the
    resolvent algebra $\fR$, which are determined by relations \eqref{e.3.13}
    and \eqref{e.3.15} for $\beta > 0$, $\mu < \epsilon_{L , 1}$, $L > 0$, and
    functionals $l_{L,k}$ specified above. 
    \begin{itemize}
    \item[(i)] Let $\beta > 0$, $\mu < 0$. There exists a state
      $\overline{\omega}_{\beta, \mu, \infty \, ; \, l_{\infty,k}}$ on $\fR$ 
      which is a \mbox{weak-*} limit point of the given states
      for large $L$. It has the following properties:
      there is a complex subspace
      $\overline{\cD}_\infty \subset L^2(\RR^s)$ such that 
      the algebra $\fR(\overline{\cD}_\infty)$ is regularly represented
      in the GNS representation induced by the limit state; 
      resolvents assigned to functions in
      $L^2(\RR^s) \backslash \overline{\cD}_\infty$ are
      trivially represented there.
      The state is gauge invariant and stationary with regard to 
      the limit dynamics $t \mapsto \ad{\,e^{it(H_\infty - \mu N)}}$,
      and the correlation functions depend continuously
      on time. If the underlying limit functional
      $l_{\infty, k}$ is different from $0$, the limit state does not
      satisfy the KMS condition.
      \item[(ii)] Let $L \in \RR_+ \, \cup \, +\infty$ be fixed and
     let $\beta > 0$, $\mu < \epsilon_{L,1}$. The infinite 
     particle number limit of the states exists, pointwise on $\fR$,
     \be
     \lim_{\mu \nearrow \epsilon_{L,1}}
     \overline{\omega}_{\beta, \mu, L \, ; \; l_{L, k}}(R) =
     \overline{\omega}_{\beta, \epsilon_{L,1}, L \, ; \, l_{L,k}}(R) \, ,
     \quad R \in \fR \, .
     \ee
     The algebra $\fR(\cD_L \bigcap \overline{\cD}_\infty)$ is regularly
     represented in the GNS representation induced by the limit
     state, where $\cD_L$ was defined in Lemma \ref{l.3.1}.
     Resolvents assigned to functions in
     $L^2(\RR^s) \backslash (\cD_L \bigcap \overline{\cD}_\infty)$ are
     trivially represented there.

     \medskip 
     At level $k = 1$, all limit states     
     satisfy the KMS condition at inverse temperature
     $\beta$ for the dynamics given by
     $t \mapsto \ad{e^{it(H_L - \epsilon_{L,1}N)}}$;
     if $L < \infty$ they are unique and do not depend on the
     normalization of the wave functions  $e_{L,1}$ entering in $l_{L,1}$.
     For level $k > 1$ and finite $L$
     the limit states do not satisfy the KMS condition.
     In the thermodynamic limit  $L = +\infty$, there exist 
     several limit states satisfying the KMS condition
     for any given level $k$; they depend on the 
     normalization of the wave functions  $e_{L,k}$ in the
     approximating functionals $l_{L,k}$, cf.\ the remark below. 
   \item[(iii)]
     Let $\beta > 0$, $\mu = 0$, and $l_{\infty,k} \neq 0$
     for some $k \in \NN$. The corresponding
     KMS~state \ $\overline{\omega}_{\beta, 0, \infty \, ; \, l_{\infty, k}}$ is not
     mixing, \ie describes a mixture of phas\-es. It can (centrally)
     be decomposed into a mean over the gauge group, 
    \be \label{e.central}
    \overline{\omega}_{\beta, 0, \infty \, ; \, l_{\infty, k}}
    =  (1/2 \pi) \int_0^{2 \pi} \! du \ 
   \omega_{\beta, 0, \infty \, ; \, l_{\infty, k}} \, \scirc \, 
   \gamma_u \, ,
   \ee
   where $\omega_{\beta, 0, \infty \, ; \, l_{\infty, k}}$ is a KMS state,
   describing a pure phase. This state is not gauge invariant.
     \end{itemize}
  \end{proposition}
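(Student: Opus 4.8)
The plan is to obtain both limits from Proposition~\ref{p.2.1} and Lemma~\ref{l.3.1}, viewing the averaged states as quasifree states in which the one--point phase factor in \eqref{e.2.4} has been replaced by the Bessel factor $J_0\big(|\langle e_{L,k},\textstyle\sum_j u_j f_j\rangle|\big)$. First I would note that, since $|J_0|\le1$ and $J_0$ is continuous, the integrand in \eqref{e.2.4} stays dominated by $u\mapsto e^{-\sum_j u_j\Ree{\lambda_j}}$; convergence of the scalar products on the relevant subspace is furnished by Lemma~\ref{l.3.1} (on all of $L^2(\RR^s)$ when $\mu<0$, on $\cD_L$ when $\mu\nearrow\epsilon_{L,1}$), so the dominated convergence theorem produces the limit correlation functions just as in the proof of Proposition~\ref{p.2.1}. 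Weak--$*$ compactness of the state space together with the finite dimensional extraction of Proposition~\ref{p.2.1}(iii), now applied to the \emph{complex} functionals $f\mapsto\langle e_{L,k},f\rangle$, yields the limit point in~(i) and the pointwise limit in~(ii). The regularity domains are then read off from the decay $J_0(r)\to0$ as $r\to\infty$: any resolvent whose test function makes $|\langle e_{L,k},\cdot\rangle|$ diverge is annihilated in the limit, so the regularly represented subalgebra is generated by the functions on which this modulus stays bounded, namely the maximal gauge invariant subspace $\overline{\cD}_\infty=\bigcap_{\zeta\in\TT}\zeta\,\cD_\infty$ in~(i) and $\cD_L\cap\overline{\cD}_\infty$ in~(ii); triviality of the kernels there follows as at the close of Sect.~2.

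Gauge invariance and stationarity are immediate, the former because the Bessel factor depends only on the modulus and the latter from the dynamics invariance $\lim_{\iota}l_{L_\iota,k}(e^{ith_{L_\iota,k}}f)=l_{\infty,k}(f)$ established before the proposition; time continuity is again dominated convergence. The crux is the KMS question, which I would settle by one computation of the averaged two--point function. Performing the time average of the coherently shifted state and using \eqref{e.2.3} gives
\be
\overline{\omega}_{\beta,\mu,L\,;\,l_{L,k}}\big(\ad{e^{it(H_L-\mu N)}}(\phi(f))\,\phi(g)\big)=\langle e^{it(h_L-\mu)}f,\,g\rangle_{\beta,\mu,L}+\tfrac12\,\Ree{e^{it(\epsilon_{L,k}-\mu)}\langle e_{L,k},f\rangle\,\overline{\langle e_{L,k},g\rangle}}\,,
\ee
the oscillating anomalous contributions averaging to zero because $\epsilon_{L,k}-\mu\neq0$. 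The thermal summand obeys the KMS boundary condition by the argument in the proof of Proposition~\ref{p.3.2}(i); the condensate summand equals $\tfrac14\big(e^{it(\epsilon_{L,k}-\mu)}A\overline B+e^{-it(\epsilon_{L,k}-\mu)}\overline A B\big)$ with $A=\langle e_{L,k},f\rangle$, $B=\langle e_{L,k},g\rangle$, and its continuation $t\mapsto t+i\beta$ reproduces the reversed correlation \emph{if and only if} $\epsilon_{L,k}-\mu=0$. This single equivalence decides every case: for $\mu<0$ the effective frequency tends to $-\mu>0$, so~(i) fails KMS once $l_{\infty,k}\neq0$; for finite $L$ and $k=1$ the limit $\mu\nearrow\epsilon_{L,1}$ removes the ground state from $\cD_L$, so $l_{L,1}$ vanishes on $\fR(\cD_L)$ and the unique limit is the thermal state of Proposition~\ref{p.3.2}(ii), independent of the normalization; for finite $L$ and $k>1$ the spectral gap leaves $\epsilon_{L,k}-\epsilon_{L,1}>0$ and KMS fails; for $L=+\infty$ one has $\epsilon_{L,k}-\mu=L^{-2}(\epsilon_k-\epsilon_1)\to0$, restoring KMS for every $k$, while the size of the surviving condensate term depends on the chosen normalization of $e_{L,k}$, yielding the asserted family of distinct KMS states.

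For part~(iii), at $\mu=0$ in the thermodynamic limit the effective frequency $\epsilon_{L,k}-\mu$ tends to zero, so the limit functional $l_{\infty,k}$ is invariant under the limit dynamics. The \emph{un}averaged state $\omega_{\beta,0,\infty\,;\,l_{\infty,k}}$, obtained by composing the limit KMS state with the Weyl shift implementing $l_{\infty,k}$, is therefore stationary, and since this shift commutes with the dynamics the KMS property is preserved under it; the state is thus KMS but not gauge invariant, \ie a pure phase. As the dynamics acts on the condensate mode exactly as the gauge group, the time average defining $\overline{\omega}_{\beta,0,\infty\,;\,l_{\infty,k}}$ coincides with the gauge average, which is relation~\eqref{e.central}. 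To see that this is the \emph{central} decomposition I would exhibit the macroscopic condensate phase as a classical observable in the centre of the GNS representation, so that the phases $\omega_{\beta,0,\infty\,;\,l_{\infty,k}}\scirc\gamma_u$ are mutually disjoint for distinct $u$ and \eqref{e.central} is an orthogonal decomposition into extremal KMS states. Non--mixing then follows from this mixture structure: each pure phase clusters in time, whence
\be
\lim_{t\to\infty}\overline{\omega}_{\beta,0,\infty\,;\,l_{\infty,k}}\big(R_1\,\ad{e^{itH_\infty}}(R_2)\big)=(1/2\pi)\!\int_0^{2\pi}\!\!du\,\omega_{\beta,0,\infty\,;\,l_{\infty,k}}(\gamma_u(R_1))\,\omega_{\beta,0,\infty\,;\,l_{\infty,k}}(\gamma_u(R_2))\,,
\ee
which for $R_1,R_2$ sensitive to the condensate differs from the product $\overline{\omega}(R_1)\,\overline{\omega}(R_2)$ of the separately averaged expectations, so the state is a genuine mixture of phases.

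The main obstacle is the KMS step: correctly extracting the averaged two--point function, checking the analytic continuation of its condensate part, and tracking the improper limit functional $l_{\infty,k}$, whose representative $\langle e_{L,k},\cdot\rangle$ must be renormalized along the scaling sequence so that the effective frequency $\epsilon_{L,k}-\mu$ and the condensate amplitude acquire compatible limits. The subsidiary difficulty in~(iii) is verifying genuine \emph{disjointness} of the gauge translated phases, so that \eqref{e.central} is the central, and not merely some, decomposition.
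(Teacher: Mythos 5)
Your strategy tracks the paper's proof almost step for step: limits via Lemma \ref{l.3.1} fed into the dominated-convergence scheme of Proposition \ref{p.2.1}, with the one-point phase in \eqref{e.2.4} replaced by the Bessel factor; regularity domains $\overline{\cD}_\infty$, respectively $\cD_L \cap \overline{\cD}_\infty$, read off from the decay of $J_0$; and the decomposition \eqref{e.central} obtained from the identity $J_0(|\sum_j u_j \langle e_{\infty,k}, f_j\rangle|) = (1/2\pi)\int_0^{2\pi} du\, e^{i \sum_j u_j l_{\infty,k}(e^{iu} f_j)}$ interchanged with the $u$-integration — this is exactly the paper's derivation. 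Your averaged two-point function, including the condensate summand $\tfrac12 \Ree{e^{it(\epsilon_{L,k}-\mu)}\langle e_{L,k},f\rangle \overline{\langle e_{L,k},g\rangle}}$, is computed correctly. There is, however, one genuine soft spot: the claim that the analytic-continuation equivalence on the two-point function "decides every case." The averaged states are \emph{not} quasifree — their generating functional carries the non-Gaussian factor $J_0$ — so KMS of the two-point function does not imply KMS of the state. Your criterion is therefore conclusive only for the \emph{negative} cases ((i) with $\mu<0$, and $k>1$ at finite $L$), where it in fact makes the paper's terse "the time dependence of the Bessel factor impedes the KMS property" pleasantly explicit. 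For $k=1$ and finite $L$ you correctly bypass the issue ($l_{L,1}$ vanishes on $\cD_L = (1-E_{L,1})L^2(\RR^s)$, so the limit is the quasifree state of Proposition \ref{p.3.2}(ii)). But for $L=+\infty$ and arbitrary $k$, "restoring KMS" requires an argument covering \emph{all} correlation functions: the paper's is that at $\mu=0$ the Bessel factor is time-independent and, depending only on $\sum_j u_j f_j$, invariant under reordering of the resolvents, so the KMS boundary condition reduces to that of the scalar products. Alternatively, your own part (iii) supplies the repair: write $\overline{\omega}_{\beta,0,\infty\,;\,l_{\infty,k}}$ via \eqref{e.central} as a gauge mean of shifted states, each KMS because the dynamics-invariant shift automorphism $R(\lambda,f)\mapsto R(\lambda+i\,l_{\infty,k}(f),f)$ commutes with the time evolution, and a mean of $\beta$-KMS states is again $\beta$-KMS. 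You should say this explicitly rather than rely on the two-point equivalence.

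On centrality you propose constructing a classical phase observable in the centre and proving disjointness of the states $\omega_{\beta,0,\infty\,;\,l_{\infty,k}} \scirc\, \gamma_u$, flagging this as unresolved; that is heavier machinery than needed. The paper's route closes it with standard KMS simplex theory: each component is mixing (by the argument of Proposition \ref{p.3.2}(iii), since the functional contribution $e^{-i\sum_j u_j l_{\infty,k}(f_j)}$ factorizes, unlike the Bessel factor), mixing KMS states are primary, hence extremal in the simplex of $\beta$-KMS states, and the decomposition of a KMS state into extremal KMS states is unique and coincides with its central decomposition — distinct extremal KMS states at the same $\beta$ are automatically disjoint, so no central observable need be exhibited. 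With those two repairs — both available from ingredients already in your text or in the paper's Section 3 — your argument is sound and essentially the paper's.
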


  \noindent \textbf{Remark:} If $L = \infty$,
  the space $\overline{\cD}_\infty$,    
  determined by the limit functional $l_{\infty, k}$, 
  must have a non-trivial intersection with 
  the domain of~$|\bP|^{-1}$ in order to arrive at non-trivial states
  in the limit of infinite particle number.
  Let us briefly comment on this point 
  in case of potentials $\bx \mapsto V(\bx)$ which are
  real analytic. There the corresponding (real) functions
  $\bx \mapsto e_k(\bx)$ of $h_1$ are also real analytic, bounded, 
  and the complement of their nodal sets consists of at most
  $k$ connected components, cf.\ \cite[Ch.\ 3.4]{BeSh}.  
  If $k=1$, the function $\bx \mapsto e_1(\bx)$
  can be chosen to be positive, so for
  test functions $f$ one obtains in the limit  
  $l_{\infty ,1}(f) = e_1(0)  \int \! d\bx \, \mbox{Im} f(\bx) $.
  In $s > 2$ dimensions, any test function lies in the
  domain of $|\bP|^{-1}$ and the limit functional is
  non-trivial. But in $s = 1,2$ dimensions, this domain
  condition implies that $f$ must be the (partial)
  derivative of another test function, 
  $f = - \bpartial \bg$, so the integral vanishes.
  In that case one can renormalize the approximating
  functionals $l_{L,1}$ by an additional factor $L$,
  giving in the limit
  $l'_{\infty ,1}(f) = \bpartial e_1(0) \cdot 
  \int \! d\bx \, \mbox{Im} \, \bg(\bx) $. 
  If this limit is still zero, one can continue this
  procedure with higher powers of $L$ and smaller subsets of
  test functions. It yields a non-trivial result after a finite
  number of steps since $e_1$ is real analytic. Thus also in
  $s=1,2$ dimensions the ground state wave functions gives rise
  to functionals in the thermodynamic
  limit with non-trivial domains $\overline{\cD}_\infty$. 
  In a similar manner the existence of functionals involving 
  the excited states $e_k$, $k > 1$, can be established, including
  also those cases, where the nodal sets pass through the origin.

\begin{proof}
  (i) According to the arguments given
  in the proof of the second part of Proposition
  \ref{p.2.1} and Lemma \ref{l.3.1},
  the thermodynamic limits of the time dependent correlation
  functions in the given states coincide 
  with those in the states
  $\overline{\omega}_{\beta, \mu, \infty \, ; \, l_{L,k}}$,
  where the underlying scalar products  
  $\langle \, \cdot \, , \, \cdot \, \rangle_{\beta, \mu, L}$
  are replaced by their limit
  $\langle \, \cdot \, , \, \cdot \, \rangle_{\beta, \mu, \infty}$.
  The $L$-dependence of the correlation functions in 
  the latter states arises from the term 
  \be \label{e.3.20}
  J_0(|\langle e_{L, k} , \sum_j u_j e^{it_j(h_L - \mu)} f_j \rangle | )
  = J_0(|\sum_j u_j e^{it_j(\epsilon_k/L^2 - \mu)} 
  \langle  e_{L, k},  f_j \rangle | ) \, ,
  \ee
where $f_j \in L^2(\RR^s)$, $t_j \in \RR$ for $j = 1, \dots , n$.
It replaces in  equation \eqref{e.2.4} the 
phase factor depending on the functionals $l_{L,k}$.
Since $\langle e_{L, k} , f \rangle
= (l_{L,k}(if) + i \, l_{L,k}(f) )$ it follows, as explained,
that the term \eqref{e.3.20}  vanishes in the
thermodynamic limit for almost all $u_1, \dots u_n$ if
$f_k \in L^2(\RR^s) \backslash \overline{\cD}_\infty$ for some $k$. 
If all functions are contained in $\overline{\cD}_\infty$,
the limit is given by $ J_0(|\sum_j u_j e^{-it_j \mu } 
\langle  e_{\infty, k},  f_j \rangle | ) $,
where $f \mapsto \langle  e_{\infty, k},  f_j \rangle$
is the complex linear functional on $\overline{\cD}_\infty$, 
determined  by $l_{\infty, k}$. The representation of the 
algebra $\fR(\cD_\infty)$ induced by the limit state
is regular, as can be seen by inspection of the expectation 
values of the underlying resolvents 
in equation \eqref{e.2.4}, cf.\ the remark at the end
of Sect.\ 2. It is also clear that all resolvents
assigned to functions $f \in L^2(\RR^s) \backslash \overline{\cD}_\infty$
are trivially represented there.

\medskip
The limit state is gauge invariant, being approximated    
by such states. Putting $t_j = t$, $j = 1, \dots , n$, in
the Bessel factor and the scalar products
appearing in the correlation functions of the
limit state, derived from \eqref{e.2.4}, it is apparent that this state is
stationary under the action of the limit dynamics. 
Moreover, the continuity properties of the
time translations on $L^2(\RR^s)$ and of  
the Bessel factor imply that  
the correlation functions are continuous with
regard to separate time translations of the underlying operators.
But if $l_{\infty, k} \neq 0$ and $\mu < 0$, 
the time dependence of the 
Bessel factor impedes the KMS property of the limit state. 

\medskip 
(ii) Adopting the
arguments given in the proof of Proposition \ref{p.3.2}, 
the convergence of the states follows from the
convergence, respectively divergence properties 
of the scalar products, established in
Lemma \ref{l.3.1}, and the continuity properties of the  
Bessel factor with regard to $\mu$.
The properties of the resulting algebras are then  
established as in the preceding step. 

\medskip
If $k = 1$, the Bessel factor is time independent for
$\mu = \epsilon_{L,1}$ and any finite or
infinite length $L$.
The KMS property of the resulting states therefore follows
from the corresponding properties of the
scalar products, established in the proof of
Proposition~\ref{p.3.2}. Now for finite $L$ the domain
$\cD_L$ is equal to $(1 - E_{L,1}) \, L^2(\RR^s)$, which
lies in the kernel of $l_{L,1}$, hence the KMS state
is not modified by this functional. If $k > 1$, the phase factors in
the Bessel factor \eqref{e.3.20} oscillate with frequency
$(\epsilon_k - \epsilon_1)/L^2$, destroying  
the KMS property. In the thermodynamic limit $L = +\infty$ and $\mu = 0$,
this factor is independent of time for any $k \in \NN$. So 
the KMS property of the states follows again from the KMS property
of the scalar products. Examples of non-trivial
limit functionals which modify the states were presented
in the preceding remark. 

\medskip 
(iii) Proceeding as in the proof of the third part of Proposition   
\ref{p.3.2}, there appears in relation \eqref{e.2.4}
the Bessel factor
$ J_0(|\sum_j u_j \langle  e_{\infty, k},  f_j \rangle |) $ 
in the expectation values of products of resolvents at
different times in the averaged state.  
It is also time independent. But, in contrast to the
exponential function, it does not split into
a product of factors containing only functions
entering in corresponding factors of the product of resolvents.
This implies after a moments reflection that the state is not mixing. 
The decomposition of the state into pure phases is 
accomplished by decomposing the Bessel factor into 
its exponential contributions, involving the gauge transformed
test functions, 
\be 
J_0(|\sum_j u_j \langle  e_{\infty, k},  f_j \rangle |)
= (1/2 \pi) \! \int_0^{2 \pi} \! \! du \,
e^{i \sum_j u_j \, l_{\infty,k}(e^{iu} f_j)},
\ \ f_j \in \cD_L \, 
{\textstyle \bigcap} \, \overline{\cD}_\infty \, .
\ee
This mean can be interchanged with the 
integration in equation \eqref{e.2.4}
for the averaged state. Since the scalar
products are invariant under simultaneous gauge transformations of its
entries one arrives at equality \eqref{e.central}.
By arguments given in the proof of the third
part of Proposition \ref{p.3.2}
one sees that the state $\omega_{\beta, 0, \infty \, ; \, l_{\infty, k}}$,
appearing on the right hand side of this equality, is mixing. Since
$l_{\infty, k} \neq 0$ it is not gauge invariant, however.
\end{proof}
This proposition shows 
that equilibrium states of physical interest,    
which are limits of Gibbs-von Neumann states, 
can be defined on the resolvent algebra without running into any
mathematical problems. The results shed light on some 
basic points. First, the infinite particle number limit of
trapped equilibrium states exists for any value of the temperature
and dimension. The
resulting equilibrium states include an infinity of
particles in the ground state which, as we shall see in more detail,
describe a  condensate. The preceding results
imply that this condensate cannot be changed by adding 
to it more of these particles with the help
of Weyl operators, the limit states are
unique. Second, the thermodynamic limit of
trapped equilibrium states exists for any value of the
temperature, chemical potential and dimension of space. 
We will see that for vanishing chemical potential 
these states describe \textit{quasi condensates} 
of zero energy modes (improper states) in one
and two dimensions. In higher
dimensions the states exhibit a maximal (critical) local
density, however. Yet the preceding results imply that 
Weyl operators creating such modes act non-trivially on
the states. So one can increase their density unlimitedly,
leading to the formation of condensates.

\medskip
In common (textbook) discussions of the phenomenon of   
Bose-Einstein condensation 
one proceeds from the (grand) canonical
ensemble with given sharp (mean) particle number and sharp
boundaries (boxes). Anticipating that the states of interest are
homogeneous, the particle density is defined as the quotient
of the number of particles and the volume of the boxes.
In order to exhibit condensates, one then compares the density
of particles in the ground
state with the total density of particles. This
well founded approach obviously does not work in case of systems
held together by regular trapping potentials, which do not have well
defined boundaries. It has therefore been proposed to use
in these models other quantities. For example, instead of the
volume, the energy level density of the particle states  
in the potential is taken as a substitute. Such ideas have been
applied in numerous proficient publications, cf.\ \cite{LiSeSoYn}
and references quoted there. Nevertheless, they are conceptually not
fully satisfactory, cf.\ \cite[Sec.\ 4.8]{Ve}. 
For example, the quantities proposed as substitutes for the
particle density can not be regarded as local ordering parameters. In
particular, they defy an interpretation in terms of
expectation values of an observable which determines 
the particle densities at different points, expected 
to vary within trapped systems. This leads us to a  
study of the concept of local particle density in the framework of the
resolvent algebra. It is in our opinion a meaningful 
tool for the analysis of condensates in trapped and untrapped systems

\section{Particle densities}
\setcounter{equation}{0}

Heuristically, the observable
determining the particle density is represented by the
function $\bx \mapsto a^*(\bx) a(\bx)$, which can be given a 
rigorous meaning in the sense of operator valued
distributions in the Fock representation. Its integral
is the particle number operator~$N$. Since we are
dealing here also with disjoint representations, there arises
the question of how one can decide whether this density is still
meaningful there. It is another virtue of
the resolvent algebra that this physically important question
can be answered within its framework.

\medskip
As we have seen, the (limit) states $\omega$ of interest typically   
determine some complex subspace $\cD \subset L^2(\RR^s)$ for which the
corresponding resolvent algebra $\fR(\cD)$ is regularly
represented in the induced GNS representation; resolvents attached to
the complement of $\cD$ are trivially represented there.  
Now let $K \subset \cD$ be any finite dimensional complex subspace.
Then, \textit{a fortiori}, the restriction of the representation
to the subalgebra $\fR(K)$ is regular. This implies by the
Stone-von Neumann theorem that this restriction is quasi equivalent to the
Fock representation of~$\fR(K)$, cf.\ \cite[Thm.\ 4.5]{BuGr2}
and its subsequent remark.
In the Fock representation of $\fR(K)$, the particle number operator
$N_K$ is densely defined. Choosing some orthonormal basis
$e_1, \dots , e_n \in K$, it can be presented in the familiar form 
$N_K = \sum_{j=1}^n a^*(e_j) a(e_j)$. In fact,
it can be approximated on its domain  by
observables in $\fA(K) \subset \fR(K)$ in the strong operator
topology.  We briefly sketch the argument: given $f \in K$ and
$\varepsilon > 0$, the operators 
\be \label{e.4.1}
a^*(f) a(f) \, ( 1 + \varepsilon a^*(f) a(f) )^{-1} =
(1/\varepsilon) \, \big( 1 -  (1 + \varepsilon a^*(f) a(f) )^{-1} )
\ee
are elements of $\fA(K)$. This follows from \cite[Lem.\ 3.1]{Bu3}   
according to which the operators
$\big(1 + \varepsilon a^*(f) a(f)\big)^{-1}$ are contained in 
$\fR(K)$; in fact, they are 
members of some compact ideal of $\fR(K)$ and 
since they are gauge invariant, they are
contained in $\fA(K)$. It is also clear that
$a^*(f) a(f) \leq \| f \|^2 \, N_K$. Hence the 
operators \eqref{e.4.1}
converge in the strong operator topology on the domain of~$N_K$ to
the unbounded operator $a^*(f) a(f)$ in the limit $\varepsilon \searrow 0$.  

\medskip
We make use of this fact in order to determine the local particle   
properties of states $\omega$ on the corresponding regular subalgebras
$\fR(\cD) \subset \fR$. 
Let $\bO \subset \RR^s$ be any open bounded region
and let $\cD(\bO) \subset L^2(\bO) \bigcap \cD$ be the
subspace of functions having support in~$\bO$; depending on the
given state, it may happen that $\cD(\bO) = L^2(\bO)$. Since the
operator sequence \eqref{e.4.1} is monontonically increasing
for decreasing $\varepsilon$, the following definition is
meaningful. 

\bigskip \noindent
\textbf{Definition:}  Let $\omega$ be a state on   
the resolvent algebra which is regular on 
$\fR(\cD)$ and annihilates the ideal in $\fR$
which is generated by resolvents with 
functions in $L^2(\RR^s) \backslash \cD$. 
Let $\bO \subset \RR^s$ be an open bounded region
and let $\fR(\cD(\bO)) \subset \fR(\cD)$ be the
corresponding regular subalgebra. Then 
\begin{itemize}
\item[(i)]
  $\omega$ admits a (partial)
  particle interpretation in~$\bO$ if for any $f \in \cD(\bO)$
  \be \label{e.4.2}
  \omega(a^*(f) a(f)) \doteq
  \lim_{\varepsilon \searrow 0} \,
  \omega(a^*(f) a(f) \, ( 1 + \varepsilon \, a^*(f) a(f) )^{-1})
  < \infty  \, .
  \ee
  The number of particles in $\bO$ with a wave function 
  $f \in L^2(\bO) \backslash \cD(\bO)$ is not defined,
  \ie the corresponding approximations \eqref{e.4.2}
  tend to infinity. 
\item[(ii)] $\omega$ is locally normal on $\fR(\cD(\bO))$ if 
  it has a particle interpretation in $\bO$ 
  and, for some  orthonormal basis
  $e_j \in \cD(\bO)$, $j \in \NN$, one has
 \be
 \omega(N_\omega(\bO)) \doteq \lim_{n \rightarrow \infty} \,
 \sum_{j = 1}^n \omega(a^*(e_j) a(e_j)) < \infty \, .
 \ee
\end{itemize}

\vspace*{-3mm}
\noindent \textbf{Remark:}    
Here $N_\omega(\bO)$ is the \textit{regular} number operator counting  
particles in the region~$\bO$ with wave functions
in $\cD(\bO)$. This operator is defined in the GNS representation of
$\fR(\cD(\bO))$ induced by $\omega$. In fact, this representation is  
quasi equivalent to the Fock representation of this
algebra~\cite{DeDo}.

\medskip
The regular number operators $N_\omega(\bO)$   
in this definition are physically significant order 
parameters, allowing to establish the appearance, respectively
absence, of condensates in equilibrium states $\omega$
within the region $\bO$. In trapped systems,
described by Gibbs-von Neumann states $\omega$, they are defined
for functions in the largest possible space $\cD(\bO) = L^2(\bO)$,
hence $N_\omega(\bO) = N(\bO)$. But also in the
thermodynamic limit the resulting states are frequently
locally normal in this strong sense. Keeping the
temperature fixed and proceeding to the maximal possible value of the
chemical potential $\mu$, one can then check whether the
corresponding expectation values
of $N_{\omega_\mu}(\bO)$ stay finite, or tend to infinity. 
In the former case the limit system has a critical density
in~$\bO$ at the given temperature. In the latter case, there
are no limitations on the density, signaling condensation 
as we shall see. 

\medskip
In computations of the expectation values of   
local particle number operators it is
sometimes  convenient to proceed to their
local densities, instead of controlling infinite
sums of occupation numbers. We therefore
recall here briefly this standard device. 
Let $d_\varepsilon$, $\varepsilon > 0$, be a sequence of test functions 
on $\RR^s$ which converges to the
Dirac measure at the origin if $\varepsilon$ tends to $0$  
and let $h$ be a positive test function with compact support. 
As is well known, one can proceed  in Fock space to the limit 
\be
\lim_{\varepsilon \searrow 0} \int \! d\bx \, h(\bx) \,
\ad{U(\bx)} (a^*(d_\varepsilon) a(d_\varepsilon))
\doteq  \int \! d\bx \, h(\bx) \, a^*(\bx) a(\bx) \, ,
\ee
where one has convergence
in the sense of sesquilinear forms between
vectors with finite particle number and regular
wave functions. The resulting form
can be extended to a selfadjoint operator and if
$\bO_1 \subset \text{supp} \, h \subset \bO_2$ one obtains
for it the bounds
\be
\inf_{\sbx \in \sbO_1} \! \! h(\bx) \, N(\bO_1)
\leq \int \! d\bx \, h(\bx) \, a^*(\bx) a(\bx) \leq
\sup_{\sbx \in \sbO_2} \! \! h(\bx) \, N(\bO_2) \, .
\ee
Similar formulas can be established for the regular  
number operators in non-Fock states. Whereas the assignment
$\bx \mapsto \omega(a^*(\bx) a(\bx))$ is in general only defined 
in the sense of distributions, the resulting expectation
values can often be presented by continuous functions.
The desired information about condensation can then
directly be extracted from the respective two-point functions,
as we will see in the subsequent section.

\section{Probing condensation}
\setcounter{equation}{0}

We analyze now the trapped equilibrium states which we have   
constructed, making use of the local order parameters
put forward in the preceding section. In order to 
highlight the virtues of our approach, we begin by 
considering the simple example of harmonic trapping potentials. Let
$h_1 = (\bP^2 + \bQ^2)$ be the unscaled Hamiltonian
on $\RR^s$. Its normalized ground state is denoted by
\mbox{$e_1 \in L^2(\RR^s)$}, carrying  the energy
$\epsilon_1 = s$. The corresponding equilibrium states
$\omega_{\beta,\mu,1}$ 
on the resolvent algebra for given 
$\beta > 0$ and $\mu < \epsilon_1 = s$
are fixed by equation \eqref{e.3.3} with $L=1$. 
We want to determine in the limit of maximal
chemical potential $\mu$ the expected number of particles
with given wave function in $\cD = (1 - E_1) L^2(\RR^s)$, where $E_1$
is the projection onto the ground state of $h_1$.
For $f,g \in L^2(\RR^s)$ we obtain in the limit
$\mu \nearrow \epsilon_1$, recalling that the limit state
is regular on $\fR(\cD)$, 
\be \label{e.extension}
\omega_{\beta, \epsilon_1, 1}(a^*((1-E_1)f)a((1-E_1)g))
= \langle g, (e^{\beta(h_1 - \epsilon_1)} - 1)^{-1} (1 - E_1) f \rangle \, .
\ee
It follows from this equality that we can extend the limit
state on $\fR(\cD)$ to a quasifree state
on the full resolvent algebra $\fR$, denoted by
$\bomega_{\beta, \epsilon_1, 1}$, which
does not contain a single particle with wave function $e_1$. This 
fact allows us to define its particle density, as outlined in the
preceding section: we have 
\be \label{e.5.1}  
  \langle g, (e^{\beta(h_1 - \epsilon_1)} - 1)^{-1} (1 - E_1) f \rangle 
  = 
  \sum_{n = 1}^\infty \ 
  \langle g, e^{-n \beta(h_1 - \epsilon_1)} (1 - E_1) f \rangle \, . \nonumber
\ee
The kernels of the operators $ e^{-n \beta(h_1 - \epsilon_1)} (1 - E_1)$
on $L^2(\RR^s)$ are given by 
\begin{align} 
  & \bx, \by \mapsto 
  \langle \bx |  e^{-n \beta(h_1 - \epsilon_1)}
  (1 - E_1) | \by \rangle \nonumber \\
&  = e^{n \beta \epsilon_1} \, (2 \pi \sinh(2n\beta))^{-s/2} \,
  e^{-(1/2) \, (\coth(2\beta n)(\sbx^2 + \sby^2) - 2 \sbx \sby /\sinh(2 \beta n))} \nonumber \\
& -  \pi^{-s/2} e^{-(\sbx^2 +\sby^2)/2} \, , 
\end{align}
where in the second line the Mehler formula has been used; the 
contribution in the last line
is due to the projection. The sum
of these kernels is absolutely convergent 
and the result is continuous in $\bx, \by$. In particular, the
regular particle density in the (extended) limit state is given by
\begin{align} \label{e.critical}
  & \bx \mapsto
  \bomega_{\beta, \epsilon_1, 1}(a^*(\bx) a(\bx))  \nonumber \\
  & = \pi^{-s/2}  \sum_{n=1}^\infty \, \big((1 - e^{-4n\beta})^{-s/2} \,
  e^{(1 - \coth(2n\beta) + 1/\sinh(2n\beta)) \, \sbx^2}  - 1 \big) \, e^{-\sbx^2} \, .
\end{align}
It implies
$\omega_{\beta, \epsilon_1, 1}(N_{\omega}(\bO))
\leq \int_\sbO d\bx \, 
\bomega_{\beta, \epsilon_1, 1}(a^*(\bx) a(\bx))$, where
the inequality is due to the completion of the basis
in $\cD(\bO)$ to a basis in $L^2(\bO)$. One obtains 
equality in the limit $\bO \nearrow \RR^s$.
So for all temperatures, the regular local observables in 
the trapped equilibrium states indicate a
maximal (critical) local density of the thermal cloud.
On the other hand, the number of particles in the
approximating states $\omega_{\beta,\mu,1}$ with a wave function 
$f$ having some overlap with the ground state wave function,
$E_1 f \neq 0$, diverges in the limit. So we
conclude that the particles in the ground state form
condensates in the approximating states whenever the
total expected number of particles in the trap exceeds the
critical number $N_\omega$ of particles in the thermal cloud. By
comparing the density of all particles with the   
critical density \eqref{e.critical} of the cloud, one can 
determine the amount of condensate which is formed locally.

\medskip 
As was already mentioned, one can increase particle densities   
by composing states with Weyl automorphisms. In order to
arrive at stationary states, one must choose
eigenfunctions of $h_1$ in the underlying Weyl operators.
According to  Proposition~\ref{p.3.3}, choosing 
the ground state wave function $e_1$ and taking
a time average does not alter the limit states.
For (arbitrarily normalized) excited states $e_k$ 
one obtains, bearing in mind that the original state is 
gauge invariant, 
\be \label{e.5.4}
\omega_{\beta,\mu,1} \, \scirc \, \ad{W(e_k)} \, \big(a^*(f) a(f) \big)
= \omega_{\beta,\mu,1} (a^*(f) a(f)) +  |\langle e_k, f \rangle|^2 \, .
\ee
In this manner, the density of particles with
wave function $e_k$ can be made arbitrarily large in
any given region $\bO$. Yet, as we have seen
in Proposition \ref{p.3.3}(ii), the resulting states
as well as their time averages are not
in equilibrium since the correlations between operators at different times
exhibit oscillations which violate the KMS condition.
Due to the lack of interaction, these
oscillations are not suppressed by time averages,
the added particles do not equilibrate in the \mbox{thermal} background
and feel the external forces forever.  
It is an interesting question whether this feature disappears in
trapped interacting theories.

\medskip
For systems of particles trapped by non-harmonic forces, there    
are no such simple formulas for the density of particles
with regular wave functions. Yet one can rely there on the
method of counting the number of particles
locally. This amounts to computing
for bounded regions \mbox{$\bO \subset \RR^s$} and corresponding
projections $E(\bO)$ onto $L^2(\bO) \subset L^2(\RR^s)$
the expectation values of the corresponding regular number
operators $N_\omega(\bO)$, 
\be
\omega_{\beta, \epsilon_1, 1}(N_\omega(\bO))
\leq 
\text{Tr} \,
E(\bO) \, (e^{\beta (h_1 - \epsilon_1)} - 1)^{-1} (1 - E_1) \, E(\bO) \, .
\ee
Here $\epsilon_1$ and $E_1$ are the eigenvalue and projection,
respectively, fixed by the ground state $e_1$ of $h_1$. Making use of 
the estimate 
$\epsilon \mapsto (e^\epsilon - 1)^{-1} \leq e^{-k \epsilon}/(1-k) \epsilon$
for $\epsilon > 0$ and $0 < k < 1$, one obtains the upper bound,
putting $k = 1/2$, 
\be
\omega_{\beta, \epsilon_1, 1}(N_\omega(\bO))
\leq 2 (1/\beta(\epsilon_2 - \epsilon_1))^{-1} e^{(\beta/2) \, \epsilon_1}
  \, \text{Tr} \,
e^{- (\beta/2) h_1} \, .
\ee
The operators $e^{- (\beta/2) h_1}$ are of trace class, so it follows
again that, no matter how the trapping potential
is chosen, all particles with a wave function in
\mbox{$\cD = (1 - E_1) L^2(\bO)$} 
have a critical density for any
value of the temperature. As a matter of fact, since
the upper bound does not depend on the size of $\bO$, 
the total number of these particles in the limit state is finite. 
On the other hand, the  number of particles 
in~$\bO$ having a wave function which
overlaps with $e_1$ diverges if the chemical potential
approaches its maximal value. In more detail, for any
$f \in L^2(\bO)$ with $E_1f \neq 0$ one has,
cf.\ Proposition \ref{p.3.2}, 
\be \label{e.5.8}
\lim_{\mu \nearrow \epsilon_1} \,
\omega_{\beta, \mu, 1}((1 + a^*(f)a(f))^{-1}) = 0 \, .
\ee
Thus the density of particles in the ground state increases
unlimitedly in the approximating states, in contrast to
the density of all other particles in the thermal cloud escorting them.
So the particles in the ground state form Bose-Einstein condensates which can
be detected by observations in any given region $\bO$, as
outlined in the introduction. 

\medskip
In order to suppress the effects of the trapping potentials and    
to determine the inherent properties of the condensates,
it is convenient to proceed to the thermodynamic
limit. Let us briefly discuss this familiar topic from the
present point of view.
As was shown in Proposition \ref{p.3.2}, 
one obtains in the thermodynamic limit always the same
gauge invariant limit state $\omega_{\beta,\mu,\infty}$
for any given trapping potential
and $\beta > 0$, $\mu < 0$.  
The underlying scalar product is fixed by 
\be \label{e.5.9}
\langle f, f \rangle_{\beta,\mu,\infty} =
(1/2) \int \! d\bp \
{\frac{e^{\beta \, (\sbp^2 - \mu)} + 1}{e^{\beta \, (\sbp^2 - \mu)} -1}}
\ | \widetilde{f}(\bp) |^2 \, , \quad f \in L^2(\RR^s) \, ,
\ee
where the tilde $\, \widetilde{ } \,$ denotes Fourier transformation.

\medskip 
To probe for condensates in the
states $\omega_{\beta,\mu,\infty}$,    
we proceed again to the limit of maximal chemical
potential, \ie $\mu \nearrow 0$. 
In this limit the regular observables involve functions $f$ 
in the domain of~$|\bP|^{-1}$. Plugging these
functions into the preceding equation, the integrals
remain finite in the limit. Hence in terms of the regular
observables, the limit states admit a particle interpretation
in all regions~$\bO$. The question of whether the corresponding
particle numbers are summable, \ie whether the states are
locally normal, depends on the number of dimensions, however.

\medskip
In $s = 1,2$ dimensions the total    
number of particles with regular wave functions in~$\cD(\bO)$
tends to infinity in the limit, \ie the limit states
are not locally normal. This fact is in
general interpreted as absence of condensation.
But it should be noticed that for any given function
$f \in L^2(\bO)$ which does not lie in the domain of 
$|\bP|^{-1}$, the number of 
particles in $\bO$ with this particular wave function
tends also to infinity in the limit.
Thus it outnumbers the particles with wave functions
in any given finite dimensional subspace of the
regular space~$\cD(\bO)$. In this sense  
the (improper) states corresponding to the
spectral value $0$ of $| \bP |$, which are
orthogonal to the states in $\cD(\bO)$,  
form a quasi-condensate which manifests itself
in any given region $\bO$ in $s=1,2$ dimensions. 
  
\medskip 
In $s > 2$ dimensions all elements of $L^2(\bO)$    
are contained in the domain of~$| \bP |^{-1}$, in contrast
to the situation in lower dimensions. Moreover, the
\mbox{operators} 
\be
E(\bO) \, (e^{\beta \sbP^2} - 1)^{-1} E(\bO) \, ,
\quad \beta > 0 \, ,
\ee
being the product of a 
pair of adjoint Hilbert-Schmidt operators,
are of trace class. Thus the expectation values
of the (full) particle number operators $N(\bO)$
in the limit states are finite, so these states are
locally normal and possess a critical particle
density. It is homogeneous and given by
\be
\bx \mapsto \omega_{\beta,0,\infty}(a^*(\bx) a(\bx)) =
(2 \pi )^{-s} \! \int \! d\bp \, (e^{\beta \sbp^2} -1 )^{-1} \, .
\ee
In order to increase this density, one has to   
return to the approximating states and add to them
coherent configurations of (suitably renormalized) 
scaled ground states $e_{L,1}$ of the 
trapped Hamiltonian $h_L$. The resulting
states are normal with regard to the Fock representation.
Similarly to relation
\eqref{e.5.4}, one obtains by this procedure the
expectation values
\be \label{e.5.10} 
\omega_{\beta,\mu,L} \, \scirc \, \ad{W(e_{L,1})} \, \big(a^*(f) a(f) \big)
= \omega_{\beta,\mu,L} (a^*(f) a(f)) + |\langle e_{L,1}, f \rangle|^2 \, .
\ee
The local properties of the states appearing
in the thermodynamic and subsequent infinite particle number
limits can be read off from this relation by restricting
it to functions $f \in L^2(\bO)$. One can then replace
the functions $\bx \mapsto e_{L,1}(\bx) = e_1(\bx/L)$
by the pointwise products 
$\chi e_{L,1}$, where $\chi$ is the
characteristic function of $\bO$. Since 
ground state wave functions are continuous,
the functions $\chi e_{L,1}$ converge strongly
to $e_1(0) \chi$ in the limit of large $L$.
Proceeding first to the thermodynamic limit and then
to the limit $\mu \nearrow 0$, it follows from
equation \eqref{e.5.10} and the preceding
remarks that the resulting limit states
satisfy the KMS condition and are locally
normal. They describe the pure phases (primary states) 
in the central decomposition of 
condensed states, appearing in the thermodynamic limit
of Gibbs-von Neumann states in finite volume, cf.\ \cite[Thm.\ 5.2.32]{BrRo}
and \cite[Thm.\ III.3]{FaPuVe}. Their particle density is given~by
\begin{align}
\bx \mapsto \omega_{\beta,0,\infty}(a^*(\bx) a(\bx)) + |e_1(0)|^2 \, .
\end{align}
So the  density of the limit states is
enlarged by contributions from the (improper)
ground state $e_{\infty,1}$. These do not affect the
particle content of the original state and create a
condensate whose weight is fixed by the normalization
of~$e_1$. 

\medskip
We conclude this section with a remark which is of relevance in      
case of interacting systems. As already mentioned, one has to restrict
attention there to the field subalgebra $\fF \subset \fR$ in order to retain
control on the action of the dynamics. Since condensates 
in $s > 2$ dimensions typically appear as classical (mean) fields
shifting the quantum field, there arises the question of whether
the field algebra is stable under the adjoint action of Weyl operators,
\ie $\ad{W(e)}(\fF) \subset \fF$ for $e \in L^2(\RR^s)$. The answer
is affirmative, and we briefly sketch the argument for the simple case
of tensor fields which are constructed from a single resolvent.
So let $\Ree \lambda \neq 0$, let $f \in L^2(\RR^s)$
and let the norm of $e$ be sufficiently small such that
one obtains for the functional $l_e$ the upper bound
$\sup_u |l_e(e^{iu}f)|/|\Ree \lambda| < 1$. Clearly, 
$m \in \ZZ$, 
\be \label{e.5.13}
\ad{W(e)}\Big( \int_0^{2 \pi} \! du \, e^{imu} R(\lambda, e^{iu}f) \Big)
= \int_0^{2 \pi} \! du \, e^{imu}
R(\lambda - i \, l_e(e^{iu}f), e^{iu}f) \, ,
\ee
where the integrals are defined in the strong operator topology
on $\cF$. The resolvent on the right hand side of this equality
can be expanded in a Neumann series,
\be \label{e.5.14}
R(\lambda - i \, l_e(e^{iu}f), e^{iu}f) =
\sum_{n=1}^\infty (- l_e(e^{iu}f))^{n-1} \, R(\lambda,e^{iu}f)^n \, .
\ee
It is absolutely convergent in norm because of the limitations on 
$e$. Since the functional~$l_e$ is real linear,
one has $l_e(e^{iu}f) = \cos(u) \, l_e(f) + \sin(u) \, l_e(if)$.
It follows that each summand in the series \eqref{e.5.14}
gives rise in equation \eqref{e.5.13} 
to a finite sum of tensor operators in $\fF$
whose degree ranges between $m - n + 1$ and $m + n - 1$.
In view of the convergence properties of the series, this
shows that the operators \eqref{e.5.13} are contained in $\fF$
in this particular case. In a similar
manner one can treat tensor operators built by arbitrary
finite sums of products of resolvents. Iterating the adjoint
action of the Weyl operators, one finally sees that the constraints
on the norm of $e$ can be removed, establishing the statement.

\section{Conclusions}
\setcounter{equation}{0}

In the present article we have applied the framework of resolvent   
algebras in a study of thermal properties of trapped and
untrapped non-interacting bosons. In particular, we have
analyzed the properties of certain specific limit states, such as the
thermodynamic limit and the infinite particle number limit,
which are of substantial interest for the interpretation of the
theory. Compared to other
settings, this analysis is greatly simplified by the fact that
observables, which become singular in the
limit states, automatically disappear.
They are members of specific ideals which are annihilated
in the states. As a consequence, the limit states retain 
physically meaningful properties: they satisfy the KMS condition
on the full algebra and are regular on the subalgebra generated 
by the observables which remain non-trivial in the limit.

\medskip
The limit states lead to   
representations of the observable algebra which are disjoint
from the Fock representation, \ie they do not admit a particle
interpretation on the full algebra. It is therefore
natural to focus on local properties of the states which can be
determined in bounded regions of space. The resolvent algebra 
provides the necessary ingredients for such an analysis.
In case of the infinite particle number limit it
allowed us to determine local subalgebras of observables which
retain a meaningful particle interpretation in the limit states
and to calculate with their help the respective critical densities
of the thermal clouds. 
On the other hand the algebra contains local observables which
indicate the formation of condensates, outrunning these
critical densities. They lead to the 
breakdown of an associated particle picture in the limit states. 
This local point of view enabled us to treat in a
unified manner different manifestations of condensation
in trapped and untrapped equilibrium states. 

\medskip
Let us mention as an aside that the concept of    
localized observables plays an even more prominent role in the algebraic
approach to relativistic quantum physics~\cite{Ha}. It is based
on the insight that observations are always made in bounded
regions and that the physically accessible states 
are locally normal with respect to each other, 
in accordance with present results.   
The ensuing algebraic framework has led to numerous fundamental  
insights about the properties of the state space of relativistic
quantum field theories, its particle interpretation and 
sector structure,  the characterization 
of thermal states \etc. But it also led to new 
powerful algebraic schemes for the perturbative 
construction of models. Regarding the present issue 
of thermal bosons,  an intriguing
recent article by Brunetti, Fredenhagen and Pinamonti \cite{BrFrPi}
ought to be mentioned here, where states including Bose-Einstein
condensates are perturbatively constructed to all orders in an
interacting quantum field theory. 
See also the references quoted there for further
related results.

\medskip
We conclude this article with some remarks pertaining to the  
problem of condensation in trapped and untrapped interacting systems. 
To simplify the discussion,  we consider
two-body interaction potentials which are
continuous, vanish rapidly at infinity, and are repulsive.
For the trapped systems we restrict ourselves
to harmonic trapping potentials at length scale $L$.
The resulting Hamiltonians are well defined as 
selfadjoint operators $H_L$ on Fock space $\cF$. 
The adjoint action of the corresponding unitaries
$\ad{e^{itH_L}}$, $t \in \RR$, maps the algebra of
observables~$\fA$ into its canonical 
extension $\ofA$ and we also
consider the regular subalgebra $\obfA_c \subset \obfA$ on which it
acts pointwise norm continuously,  cf.~\cite{Bu3}.

\medskip
The equilibrium states of trapped systems for given  
$\beta > 0$ and $\mu < \mu_{\mbox{\tiny max}}$
(\ie the supremum of the admissible chemical potentials,   
appearing to be infinite by estimates based on the Bogolubov
approximation \cite{LiSeSoYn, Ve}) are
described on Fock space by the density operator 
$e^{-\beta(H_L - \mu N)}$, which is of trace class on $\cF$. Putting  
\be
\omega_{\beta, \mu, L}(A) \doteq
  Z^{-1} \, \mbox{Tr} \, e^{-\beta(H_L - \mu N)} \, A \, , \quad A \in \fA \, ,
\ee
where $Z$ denotes the partition function, one obtains Gibbs-von
Neumann states on the observable algebra $\fA$.
They extend by continuity to the regular algebra $\obfA_c$ and, 
since they are stationary, the expectation value of
any $A \in \bfA$ coincides with the expectation value
of its time-regularized version
$\int \! dt \, f(t) \, \ad{e^{itH_L}}(A) \in \obfA_c$
where $\int \! dt \, f(t) = 1$, cf.~equation \eqref{e.1.5}.

\medskip 
In order to check 
whether the bosons form condensates in these trapped states  
one must proceed to the supremum $\mu_{\mbox{\tiny max}}$
of the chemical potential and determine
the corresponding thermal cloud. It is noteworthy that
the limit points of the corresponding 
sequence of states $\mu \mapsto \omega_{\beta,\mu,L}$
on the regular algebra $\obfA_c$ still satisfy the KMS condition
at inverse temperature $\beta$ according to standard
results for C$^*$-dynamical systems \cite[Prop.\ 5.3.25]{BrRo}. 
Moreover, by the preceding regularization
procedure, the expectation
values of the resolvents of the particle number operators
$a^*(f)a(f)$, $f \in L^2(\RR^s)$, are defined in these 
limit states. So in order to determine the critical
density of the thermal cloud in a given region  
$\bO$ it is meaningful to compute the limits
\be
\lim_{\mu \nearrow \mu_{\mbox{\tiny max}}}
\omega_{\beta, \mu, L}(a^*(f) a(f)) \, ,
\quad f \in L^2(\bO) \, .
\ee
Functions $f$ for which this limit stays finite
form again a subspace $\cD(\bO) \subset L^2(\bO)$
corresponding to regular observables in the
limit state. In view of the repulsive nature of the 
interaction it seems plausible that this subspace
is non-trivial. It fixes the particle number
operator $N_\omega(\bO)$ of the thermal cloud, 
where the limit of $\mu \mapsto \omega_{\beta, \mu, L}(N_\omega(\bO))$
determines its critical density. The total number of particles
in $\bO$, determined by $N(\bO)$, is expected to grow
indefinitely in the trapped states in this limit.
Hence if the critical density
of the thermal cloud turns out to be finite, this would indicate  
the formation of unlimited amounts of condensate 
for increasing chemical
potential. The orthogonal complement of $\cD(\bO)$ in
$L^2(\bO)$ then describes the condensate in $\bO$. The co-dimension
of $\cD(\bO)$ may, however, be larger than one in the
presence of interaction due to the more complex structure of 
ground states.

\medskip
One may also discuss the issue of condensation in the thermodynamic limit
(no trapping potential), cf.\ \cite[Thm. 6.3.31]{BrRo}. 
There one knows from the outset that $\mu_{\mbox{\tiny max}} = 0$.
Moreover, because of the repulsive nature of the interaction, the
number of particles in bounded regions ought to be smaller than
in the non-interacting theory. We therefore conjecture that
in $s > 2$ dimensions the limit states for maximal chemical
potential are normal on the local field algebras  
$\fF(L^2(\bO))$ with regard to the
Fock representation,
\ie they have a critical local density. 
As was shown in the last part of the preceding section, one can increase
this density by the action of
Weyl automorphisms. Since the resulting states should be 
spatially homogeneous, one has to use  
Weyl operators depending on (improper) eigenstates of the
momentum operator with momentum $0$. One thereby arrives at states
which describe at given time a 
homogeneous, macroscopically occupied condensate accompanying the
thermal background. Due to the interaction between the
condensate and the background
these initial states are not stationary, however. The remaining
intricate problem is then to adjust the thermal background
states in a manner such that the composed states are in equilibrium again. 

\bigskip \noindent
\textbf{\large Acknowledgements} \\[1mm]  
We are grateful to Wojciech Dybalski and Daniel Weiss for fruitful suggestions
which helped us to improve the final version of this article.
DB would like to thank the Mathematics Institute of the University of
G\"ottingen for their generous hospitality.

\bigskip \noindent
\textbf{\large Data availability} \\[1mm]  
Data sharing is not applicable to this article as no new data were
created or analyzed in this study.

\end{document}